\documentclass[sigconf,nonacm]{acmart}

\setcopyright{none}
\usepackage{booktabs}
\usepackage{multirow}
\usepackage{graphicx}
\usepackage{subcaption}
\usepackage{url}
\usepackage{hyperref}
\usepackage{xcolor}
\usepackage{float}
\usepackage{tikz}
\usepackage{amsthm}
\newtheorem{remark}{Remark}
\usetikzlibrary{automata,arrows,positioning,shapes,arrows.meta,shapes.arrows}
\usetikzlibrary{calc}
\usepackage{acro}
\usepackage{stmaryrd}
\acsetup{
  first-style = long-short,
  list/display = used
}

\DeclareAcronym{SIL}{
  short = SIL,
  long = Safety Integrity Level
}
\DeclareAcronym{GaN HEMT}{
  short = GaN HEMT,
  long = Gallium Nitride High-Electron-Mobility Transistors
}
\DeclareAcronym{CCPS}{
  short = CCPS,
  long = Cryogenic Cyber Physical Systems
}
\DeclareAcronym{CPS}{
  short = CPS,
  long = Cyber Physical System
}
\DeclareAcronym{LSN}{
  short = LSN,
  long = Logical Synchrony Networks
}
\DeclareAcronym{HTS}{
  short = HTS,
  long = High Temperature Superconducting
}
\DeclareAcronym{SSTL}{
  short = SSTL,
  long = Synchronous Signal Temporal Logic
}
\DeclareAcronym{STL}{
  short = STL,
  long = Signal Temporal Logic
}
\DeclareAcronym{EMC}{
  short = EMC,
  long = Electromagnetic Compatibility
}
\DeclareAcronym{FPGA}{
  short = FPGA,
  long = Field Programmable Gate Array
}
\DeclareAcronym{FSM}{
  short = FSM,
  long = Finite State Machine
}
\DeclareAcronym{FOC}{
  short = FOC,
  long = Field Oriented Control
}
\DeclareAcronym{ADC}{
  short = ADC,
  long = Analogue to Digital Converter
}
\DeclareAcronym{IoT}{
  short = IoT,
  long = Internet of Things
}
\DeclareAcronym{IP}{
  short = IP,
  long = Intellectual Property
}
\DeclareAcronym{JSON}{
  short = JSON,
  long = JavaScript Object Notation
}
\DeclareAcronym{LTL}{
  short = LTL,
  long = Linear Temporal Logic
}
\DeclareAcronym{ML}{
  short = ML,
  long = Machine Learning
}
\DeclareAcronym{OS}{
  short = OS,
  long = Operating System
}
\DeclareAcronym{PWM}{
  short = PWM,
  long = Pulse Width Modulation
}
\DeclareAcronym{SNR}{
  short = SNR,
  long = Signal-to-Noise Ratio
}
\DeclareAcronym{LET}{
  short = LET,
  long = Logical Execution Time
}
\DeclareAcronym{pLTL}{
  short = pLTL,
  long = Probabilistic Linear Temporal Logic
}
\DeclareAcronym{HIL}{
  short = HIL,
  long = Hardware-in-the-Loop
}
\DeclareAcronym{TTP}{
  short = TTP,
  long = Time Triggered Protocol
}
\DeclareAcronym{VaV}{
  short = VaV,
  long = Verification and Validation
}
\DeclareAcronym{CPS-Week}{
  short = CPS-Week,
  long = Cyber-Physical Systems Week
}
\DeclareAcronym{SIH}{
  short = SIH,
  long = Signal Invariance Hypothesis
}
\DeclareAcronym{SDTA}{
  short = SDTA,
  long = Synchronous Discrete Timed Automata
}

\newcommand{\ignore}[1]{{}}

\title{Synchronous Signal Temporal Logic for Decidable Verification of Cyber-Physical Systems}

\author{Partha Roop}
\affiliation{
  \institution{University of Auckland}
  \city{Auckland}
  \country{New Zealand}
}
\email{p.roop@auckland.ac.nz}

\author{Sobhan Chatterjee}
\affiliation{
  \institution{University of Auckland}
  \city{Auckland}
  \country{New Zealand}
}
\email{sobhan.chatterjee@auckland.ac.nz}

\author{Avinash Malik}
\affiliation{
  \institution{University of Auckland}
  \city{Auckland}
  \country{New Zealand}
}
\email{avinash.malik@auckland.ac.nz}

\author{Nathan Allen}
\affiliation{
  \institution{Auckland University of Technology}
  \city{Auckland}
  \country{New Zealand}
}
\email{nathan.allen@aut.ac.nz}

\author{Logan Kenwright}
\affiliation{
  \institution{University of Auckland}
  \city{Auckland}
  \country{New Zealand}
}
\email{logan.kenwright@auckland.ac.nz}

\begin{abstract}

  Many \ac{CPS} work in a safety-critical environment, where correct
 execution, reliability and trustworthiness are essential. \ac{STL}
 provides a formal framework for checking safety-critical \acs{CPS}.
 However, static verification of \acs{STL} is undecidable,in general, except when
 we want to verify using run-time-based methods, which have
 limitations. We propose \ac{SSTL}, a decidable fragment of \acs{STL},
 which admits static safety and liveness property verification. In
  \acs{SSTL}, we assume that a signal is sampled at fixed discrete steps,
 called \emph{ticks}, and then propose a hypothesis, called the
  \ac{SIH}, which is inspired by a similar hypothesis for synchronous programs. We
 define the syntax and semantics of \acs{SSTL} and show that \acs{SIH}
 is a necessary and sufficient condition for equivalence between an
  \acs{STL} formula and its \acs{SSTL} counterpart. By translating
  \acs{SSTL} to LTL$_{\mathcal{P}}$ (LTL defined over predicates), we enable decidable model checking using the SPIN
 model checker. We demonstrate the approach on a 33-node human heart
 model and other case studies.

\end{abstract}

\keywords{SSTL, temporal logic, static verification, model checking, liveness properties, safety properties}

\begin{document}

\maketitle

\section{Introduction}
\label{sec:introduction}

Autonomous systems such as agricultural robots, self-driving vehicles,
manufacturing units, amongst others, are a subclass of
\ac{CPS}~\cite{alur2015principles}, where (usually) discrete and
distributed controllers control a continuous plant. Such systems operate
within environments with human presence. Hence, guaranteeing their safe
operation is essential. Many different techniques for autonomous safety
have been proposed. These techniques can be
classified into \textcircled{1} static
verification~\cite{althoff2021set}, which makes sure that the system is
safe for operation before deployment. However, it is well known that
static verification of \ac{CPS} is undecidable even in simple
cases~\cite{henzinger1995s}. \textcircled{2} Runtime verification of
\ac{CPS}~\cite{dreossi2019compositional,
  dalrymple2024guaranteedsafeaiframework}, where the verification engine
checks the required properties on the deployed system. Runtime
verification is complementary to static verification, since it does not
enumerate all possible execution paths of the system. \textcircled{3}
Controller synthesis~\cite{belta2019formal}, which produces a
controller for a given plant while satisfying the
formal properties. 
The controller synthesis problem, which is formulated as a constrained optimisation problem, has scalability issues for
large systems.

Irrespective of the chosen option, formal properties of \ac{CPS} are
specified in a suitable temporal logic with predicates over continuous variables.
\ac{STL}~\cite{donze_on_2013_stl} is the most widely used temporal logic
for property specification of autonomous CPS. Here, static~\cite{lercher2024using} and
runtime verification~\cite{yu_stlmc_2022,bae_bounded_2019} and
controller
synthesis~\cite{yang_continuous-time_2020,lindemann_control_2019} have been studied.
Static verification of \ac{STL} properties is
undecidable~\cite{bae2019bounded}. Overapproximating the system
trajectories can mitigate this problem, to some extent. However, such
over-approximation is only applicable for safety properties (\emph{nothing bad ever
happens}). To the best of our knowledge, \emph{liveness} (that \emph{something
good eventually happens}) properties, expressed in \ac{STL}, have never been statically verified
for autonomous systems. We propose to address the challenges of
undecidability of \ac{STL} property verification for both safety and
liveness in this work. These methods are essential for the verification of \emph{environment constraints} 
for safe autonomy to be realisable.

We focus on
environmental inputs and plant outputs, sensed as time series data, also
known as \emph{signals}. For example, consider a 33-node heart model scenario where we focus on two key nodes in the heart conduction system: the sinoatrial (SA) node and the atrioventricular (AV) node. At any instant in time, the state of the heart can be represented by a vector containing the electrical potentials at these two nodes. Formally, the state vector $\vec{x}$ is a map, $\vec{x}: \mathbb{R}_{\geq0} \rightarrow \mathbb{R}^{2}$, where the domain captures any instant in time and the range gives the pair of potentials at the SA and AV nodes at that time. Here, the progression of time is continuous, whose domain is $\mathbb{R}_{\geq0}$, also known as \emph{real-time systems}. 
Verification of signals in the real-time setting is undecidable. Here, we consider an alternative,
also known as the \emph{discrete time} setting, where time progresses in discrete steps, 
whose domain is $\mathbb{N}_{\geq 0}$. This approach is used in many practical systems, including 
all systems which are realised on a discrete computer, say by sampling some input.
A programming paradigm, called \emph{synchronous programming}~\cite{benveniste_synchronous_2003},
is based on such environmental sampling using discrete instants called \emph{ticks}.
The rationale for using the synchronous formulation for discritising \ac{STL} is as follows:

\begin{enumerate}
\item Any signal from a physical source has inertia proportional to its mass. For example, the ECG sensor. Here, after a \texttt{P}, which means an atrial depolarisation (beat), there is a gap before the next event \texttt{Q}, which indicates the start of the ventricular depolarisation.
\item In signal processing, in addition to transforms over continuous
 signals, there are transforms over the discrete versions. Consider the
 Fourier transform (FT) and the discrete Fourier transform (DFT). While
 FT is suitable for a theoretical analysis over the continuous time
 signal, DFT is suitable for the design of computational algorithms
 over sampled signals. Hence, we argue that a synchronous variant of
  \ac{STL} is needed, where we discretise the logic itself.
\item In synchronous programming \cite{benveniste_synchronous_2003},
 which is widely used for \ac{CPS}, a hypothesis is used to constrain
 the environment. This is known as the \emph{synchrony hypothesis},
 which states that \emph{the control system reacts infinitely fast
 relative to its plant}. In other words, it requires that data from
 sensors does not arrive faster than the rate at which the controller
 can process the data. This hypothesis can be adapted in our setting to
 formalise the relationship between \ac{STL} and its abstraction, which
 we term \emph{\ac{SSTL}}, especially to formalise the status of a signal
 between two sample points.
\end{enumerate}

Based on the above, our premise is that a discrete variant of \ac{STL},
which discretises the \ac{STL} logic and is developed for
synchronous systems, called \ac{SSTL}, will lead to methods for scalable
verification of \ac{CPS}. We also argue that, under suitable assumptions
of \emph{robustness}~\cite{fainekos_robustness_2009}, \ac{SSTL} is a
sound and complete abstraction of \ac{STL}. The main contributions of
the paper are as follows:

\begin{enumerate}
\item We propose the first synchronous (discrete) abstraction of the
  \ac{STL} logic, called \ac{SSTL}. We introduce a decidable encoding of
  \acf{STL} called \acf{SSTL}. 
\item We define the syntax and semantics of \ac{SSTL} and prove that
 under a suitable assumption of signal characteristics, this can be
 used as a sound and complete abstraction of \ac{STL}.
\item We present an LTL$_{\mathcal{P}}$ translation of \ac{SSTL} and use it to model check \ac{SSTL} properties using the SPIN model checker. This makes the problem of model checking \ac{SSTL} properties decidable and practical.
\item Another key contribution is the formalisation of a sound human heart model using formal temporal logic and its verification using SPIN.
\end{enumerate}

The rest of the paper is organised as follows. In
Section~\ref{sec:motivating-example}, we present a motivating example of
a heart model to illustrate the challenges and requirements for
discrete temporal logic specifications. In
Section~\ref{sec:methodology}, we formally define \ac{SSTL}, including
its syntax, semantics, and the \acf{SIH} that enables sound and complete
abstraction from \ac{STL}. Section~\ref{sec:verification-sstl-properties} presents our
verification of \ac{SSTL} properties. Section~\ref{sec:evaluation} presents our
evaluation of \ac{SSTL} properties. Section~\ref{sec:related-work} discusses related work
in temporal logic variants and verification methods for \ac{CPS}.
Finally, Section~\ref{sec:conclusion} summarises our contributions and
discusses future research directions.

\section{Motivating Example}
\label{sec:motivating-example}

In this section, we present a motivating case study based on a 33-node human heart model, which is representative of digital healthcare systems where safety and liveness properties are critical. This case study highlights the limitations of continuous-time temporal logic specifications, such as \ac{STL}, and motivates the need for a discrete, synchronous variant suitable for digital implementations.

\subsection{Heart Model Case Study}
\label{sec:heart-model-scenario}

A well-tested and verified heart model is crucial for the development of safe and reliable digital healthcare systems, such as pacemakers and defibrillators. Often, a heart model that accurately mimics the human heart and any diseases that may be present in the heart is required. However, the complexity of the heart model and its electrical properties can make it challenging to verify properties using traditional model-checking techniques. Here, we utilise a 33-node heart model to illustrate the challenges of verifying properties using \ac{STL} over continuous-time signals and demonstrate how our SSTL framework can be employed to verify properties for the heart model.

\begin{figure}[htbp]
  \centering
  \includegraphics[width=\linewidth]{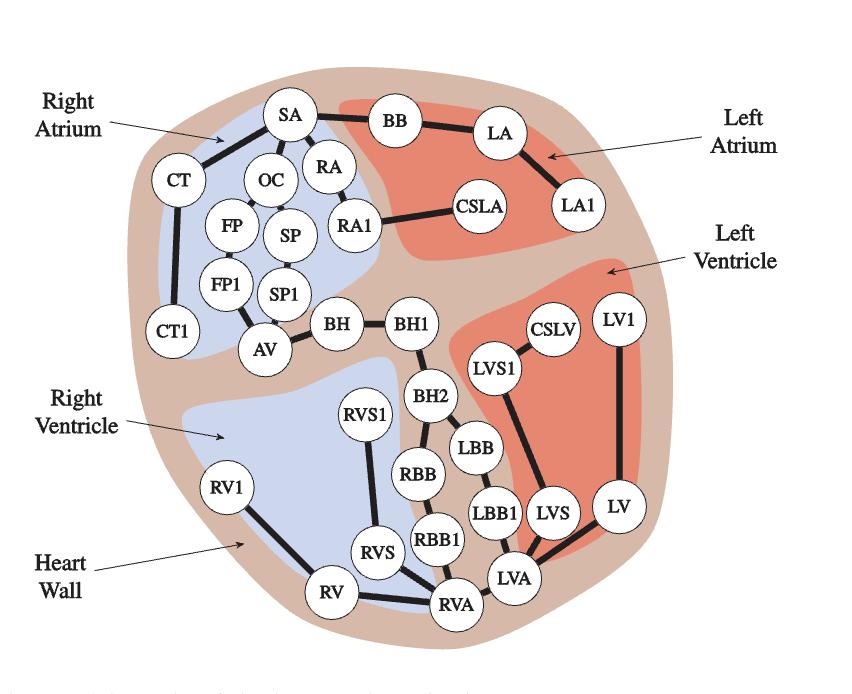}
  \caption{The 33-node heart model. (Adapted from \cite{yip2018towards})}
  \label{fig:heart_model}
\end{figure}

The 33-node heart model~\cite{yip2018towards} represents the heart's electrical conduction via interconnected nodes for atrial, atrioventricular, and ventricular regions (see Figure~\ref{fig:heart_model}). Electrical impulses travel from the sinoatrial (SA) node through to the atrioventricular (AV) node, then along bundle branches to the ventricles. The main signals of interest are the atrial and ventricular electrograms (A\_EGM and V\_EGM), plus each node's action potential, all tracked over time. 

To specify and verify timing requirements over these signals, we use \ac{STL}, which allows formal properties such as demanding ventricular activation (V\_EGM) follows atrial activation (A\_EGM) within a given window, ensuring physiologically safe heart rhythms.

\subsection{STL Specification for Heart Model}
\label{sec:stl-spec-heart}

To demonstrate the application of \ac{STL} to the heart model, we design four critical policies that capture essential safety and performance requirements for cardiac electrical activity. These specifications illustrate the types of temporal properties that must be verified in digital healthcare systems.

\paragraph{Property 1: Atrioventricular Conduction Safety}
\label{sec:policy-av-conduction}

The first policy ensures proper atrioventricular conduction timing, which is critical for maintaining coordinated heart rhythm:

\begin{align}
\varphi_{AV} := \square \left( A_{EGM} > V_{a,th} \rightarrow \lozenge_{[0.180,0.240]}(V_{EGM} > V_{v,th}) \right) 
\label{eq:av_conduction}
\end{align}

This specification requires that whenever atrial activation occurs ($A_{EGM}$ exceeds threshold $V_{a,th}$), ventricular activation must follow within 180ms to 240ms ($V_{EGM}$ exceeds threshold $V_{v,th}$). This property ensures that the AV node conducts electrical impulses from the atria to the ventricles properly, preventing dangerous delays.

\paragraph{Property 2: Ventricular Refractory Period}
\label{sec:policy-ventricular-refractory}

The second policy enforces the ventricular refractory period, preventing premature ventricular contractions that could lead to arrhythmias:

\begin{align}
\varphi_{refractory} := \square \left( V_{EGM} > V_{v,th} \rightarrow \lozenge_{[0.600,1.00]}(V_{EGM} \leq V_{v,th}) \right) 
\label{eq:ventricular_refractory}
\end{align}

This specification ensures that after each ventricular activation, the ventricular myocardium enters the refractory region ($V_{EGM}$ below threshold) within 600ms to 1000ms after the ventricular activation. This property prevents the occurrence of dangerous rapid ventricular rhythms and ensures adequate time for ventricular filling.

\paragraph{Property 3: Liveness Property}
\label{sec:policy-liveness}
The fourth property ensures that the heart model never deadlocks. It will eventually issue an atrial activation and a ventricular activation.

\begin{align}
\varphi_{liveness\_A} &:=  \lozenge (A_{EGM} > V_{a,th})\\
 \varphi_{liveness\_V} &:= \lozenge (V_{EGM} > V_{v,th})
\label{eq:liveness}
\end{align}

\begin{example}
 Consider the following \ac{STL} formula specifying an atrioventricular conduction property:
  \begin{equation}
  \varphi_{AV} = \square\left( A_{EGM} > V_{a,th} \rightarrow \lozenge_{[0.180,0.240]} V_{EGM} > V_{v,th} \right)
  \end{equation}
 This formula says: ``\emph{whenever there is an atrial activation}'' ($A_{EGM} > V_{a,th}$), ``\emph{a ventricular activation must occur within the next 180ms to 240ms}'' ($V_{EGM} > V_{v,th}$).
  
 Now, consider a continuous signal where $A_{EGM}$ and $V_{EGM}$ are defined for every time $t \in [0, \infty)$. To evaluate $\varphi_{AV}$ at $t=0.625$ seconds, where $A_{EGM}>V_{a,th}$ becomes true, we must verify that there is a later time (within 180ms to 240ms, i.e., [805, 865] milliseconds) where $V_{EGM} > V_{v,th}$. This procedure is illustrated in Figure \ref{fig:sstl_heart_example_stl}. Both $V_{a,th}$ and $V_{v,th}$ are assumed to be 80 mV. As we can see, atrial activation occurs at $t=0.625$ seconds, and ventricular activation occurs around 835 milliseconds, within 180ms to 240ms. Hence, the \ac{STL} formula is satisfied at $t=0.625$ seconds.
\end{example}

\begin{figure*}[htbp]
  \centering
  \begin{subfigure}{0.48\textwidth}
    \centering
    \includegraphics[width=\linewidth]{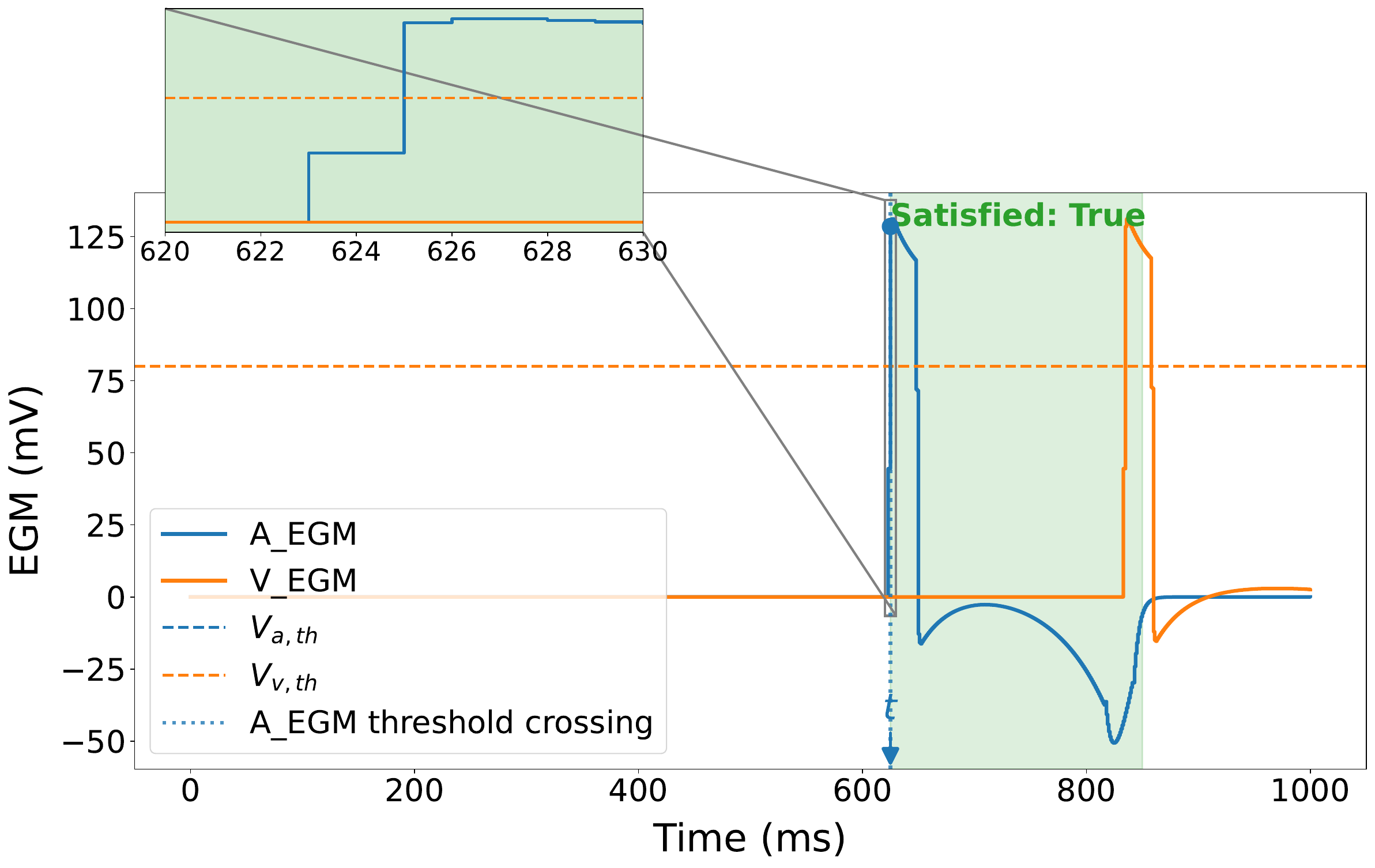}
    \caption{Evaluation of STL formula ($\varphi = \square (A_{EGM} > V_{a,th} \rightarrow \lozenge_{[0.180,0.240]} (V_{EGM} > V_{v,th}))$) at $t = 0.625$ seconds.}
    \label{fig:sstl_heart_example_stl}
  \end{subfigure}
  \hfill
  \begin{subfigure}{0.48\textwidth}
    \centering
    \includegraphics[width=\linewidth]{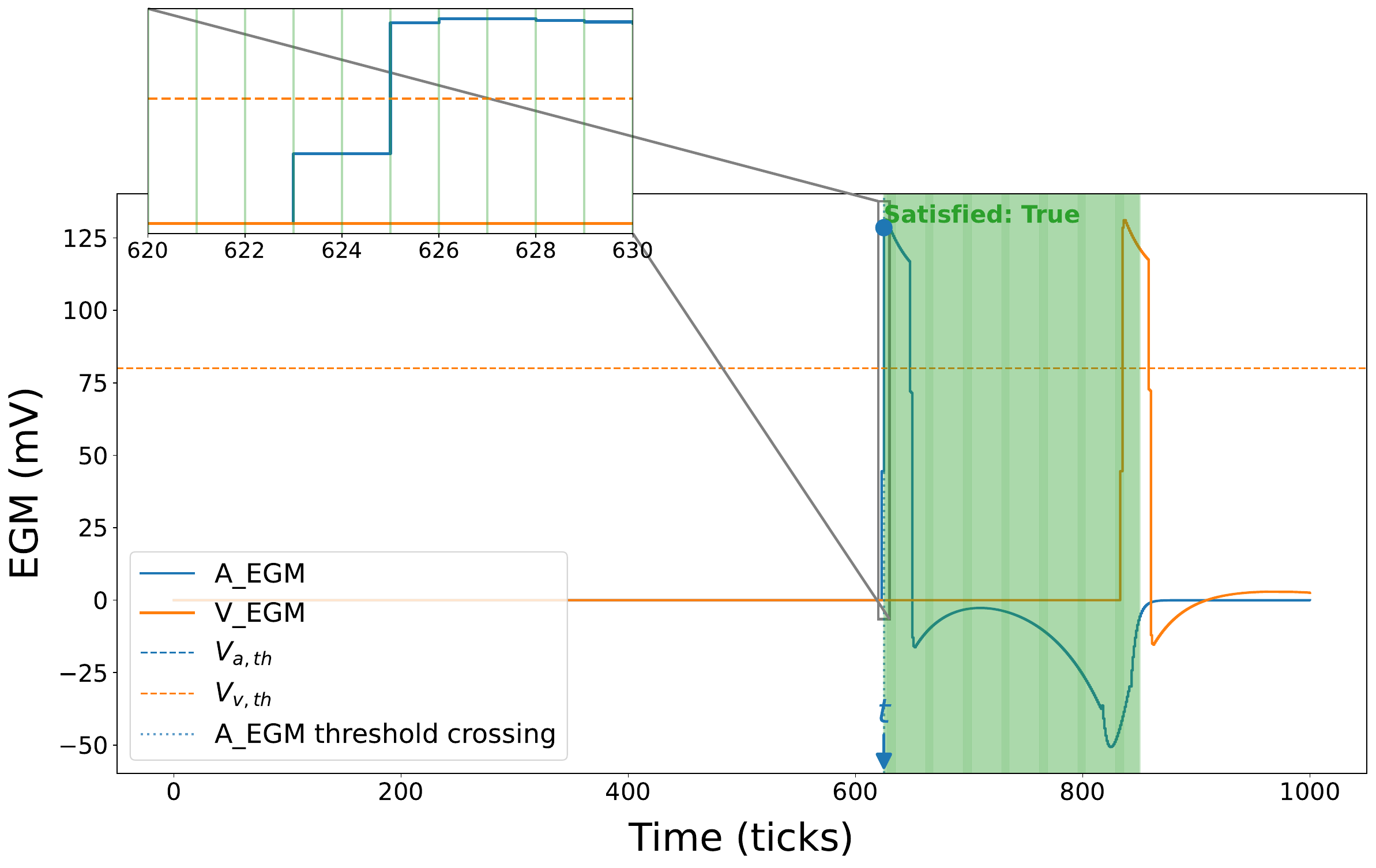}
    \caption{Evaluation of SSTL formula ($[\varphi] = \square (A_{EGM} > V_{a,th} \rightarrow \lozenge_{[[0.180],[0.240]]} (V_{EGM} > V_{v,th}))$) at $t = 625$ ticks.}
    \label{fig:sstl_heart_example_sstl}
  \end{subfigure}
  \caption{Comparison of STL and SSTL for the heart model scenario.}
  \label{fig:sstl_heart_example}
\end{figure*}

\subsection{Challenges with \ac{STL}}
\label{sec:challenges-stl-digital-healthcare}
While \ac{STL} offers a formal language for specifying temporal properties, it faces several practical limitations for digital healthcare systems.

\textbf{Continuous vs. Discrete Time}: \ac{STL} is defined over continuous time, whereas in practice, physiological signals are sampled at discrete intervals.

\textbf{Intractable Verification}: Checking properties like Equation~\ref{eq:av_conduction} in \ac{STL} requires evaluating all points in a real interval (see Figure~\ref{fig:sstl_heart_example_stl}), making general verification undecidable and computationally infeasible~\cite{bae2019bounded}.

\textbf{Sampling Issues}: Discrete sampling introduces challenges:
\emph{(i) Missing Critical Events:} Events violating safety properties may occur between samples and go undetected. 
\emph{(ii) Ambiguity in Temporal Operators:} The meaning of intervals like "within 80ms" ($\lozenge_{[0,0.08]}$) depends on the sampling rate and may lead to inconsistent evaluation.

\section{Synchronous Signal Temporal Logic (SSTL)}
\label{sec:methodology}

In this section, we develop SSTL, a synchronous variant of STL
designed specifically for CPS that operate under discrete
time. Our approach is to extend STL for signals obtained through
synchronous execution of physical systems, which we call SSTL. In
SSTL, time progresses in discrete time instants or ticks,
and traces are obtained at tick boundaries. Since most physical signals
are continuous and real-valued over $\mathbb{R}_{\geq0}$, to validate using
SSTL, we need to project the real-time variable $t$ to a discrete time variable
$[t]$. This projection, combined with our SIH, ensures that we can
reason about discrete-time behaviour while maintaining soundness with
respect to the underlying continuous system.

\begin{definition}
    \label{def:projection_operation}
 (Time Discretisation)
 Let $t \in \mathbb{R}_{\geq0}$ be a real-time variable and $\Delta t \in \mathbb{R}_{\geq0}$ represents the tick period/length in real-time. Then, we can define $[t]: \mathbb{R}_{\geq0} \rightarrow \mathbb{N}_{\geq0}$ as follows:
 \vspace{0.5em}

 For any $k\geq 0$ and $k \in \mathbb{N}_{\geq0}$, we have, $t \in [k\Delta t, (k+1)\Delta t) \implies [t] = k = \lfloor t/\Delta t \rfloor$. Therefore, for any interval $[a, b] \in \mathbb{R}_{\geq0}$, we have the discrete time interval, $[[a], [b]] \in \mathbb{N}_{\geq0}$, where $[a] = \lfloor a/\Delta t \rfloor$ and $[b] = \lfloor b/\Delta t \rfloor$.
    
    $\square$
\end{definition}

\begin{definition}
  \label{def:sstl_trace}
 (SSTL Trace)
 An SSTL trace $w_d$ is a function $w_d: \mathbb{N}_{\geq0} \rightarrow \mathbb{R}^n$ that maps discrete time points to signal values. For a given trace $w_d$ and discrete time $[t] \in \mathbb{N}_{\geq0}$, $x^{w_d}_i([t])$ denotes the value of signal $x_i$ at time $[t]$ in trace $w_d$, for $i = 1, 2, \ldots, n$.
  $\square$
\end{definition}

\begin{definition}
  \label{def:sih}
 (Signal Invariance Hypothesis (SIH))
The result of the time discretisation operation is that the signal remains invariant between two consecutive discrete ticks. Specifically, the value of a signal does not change within any single tick interval. Formally, for all $t, t' \in [k\Delta t, (k+1)\Delta t)$, where $t, t' \in \mathbb{R}_{\geq 0}$ and $k \in \mathbb{N}_{\geq 0}$, the following holds:
\begin{equation}
 x_i(t) = x_i(t') \quad \text{and} \quad [t] = k.
\end{equation}
Therefore, for any $t \in [k\Delta t, (k+1)\Delta t)$, we have
\begin{equation}
 x_i^{w_d}([t]) = x_i^{w}(t).
\end{equation}

The SIH is crucial for soundness, as it guarantees that no information is lost between discrete ticks due to changes in the underlying continuous signal.
$\square$
\end{definition}

\begin{remark}
 A stricter form of the SIH requires that the sampling frequency of the input signal (from sensors or a continuous process) satisfy the Nyquist criterion, so that the discrete trace is a sound abstraction of the continuous signal. Mathematically, $F_s \geq \kappa F_m$, where $F_s = 1/\Delta t$ is the sampling frequency, $F_m$ is the highest frequency component (or bandwidth) of the input signal, and $\kappa \in \mathbb{N}_{\geq2}$. Sampling at or above this rate preserves the signal's key features (no aliasing) and ensures that no relevant change in the continuous signal occurs \emph{between} two consecutive samples that would be missed by evaluating predicates only at tick instants. Thus, the piecewise-constant representation (constant over each tick interval) does not lose information needed for satisfaction of SSTL formulae, and the SIH is upheld. Maintaining the Nyquist rate is therefore essential for sound discrete-time verification of continuous-time properties.

 The interval boundaries in SSTL properties can be aligned with continuous-time signal boundaries by choosing $\kappa$ appropriately, so that sampling instants coincide with the natural timing of the signal and the discrete representation remains consistent with the continuous one. For a \emph{discrete} process that emits a single value at each tick (e.g., a clock or event generator with period $T$), the signal is defined only at those instants; choosing $\Delta t$ equal to or a multiple of $T$ ensures that the process is sampled at its natural rate and the SIH is satisfied by construction, since there is no underlying continuous signal between ticks. $\square$
\end{remark}

\begin{definition}
  \label{def:sstl_formula}
  
 If $\varphi$ is any STL formula, then the
 corresponding SSTL formula $[\varphi]$ is defined recursively as follows:

  \begin{align} [\varphi] := \top\ |\  [x_{i}^{w_d}(t) \geq 0]\ |\  \neg[\varphi]\ |\  [\varphi_{1}] \land [\varphi_{2}]\ |\ 
 [\varphi_{1}]\mathcal{U}_{[[a], [b]]}[\varphi_{2}]
  \end{align}

 where, $a \in \mathbb{R}_{\geq0}, \text{ and } b \in \mathbb{R}_{\geq0}$.
  $\square$
\end{definition}

\begin{definition}
  \label{def:SSTL_boolean_semantics}
 (SSTL Boolean Semantics)
 For an SSTL trace $w_d$ and discrete time $[t]$, the boolean semantics is defined recursively as:
    \begin{align}
 (w_d,[t]) &\models \top \nonumber \\
 (w_d,[t]) &\models [x_i^{w_d}(t) \geq 0] &&\iff x^{w_d}_i([t]) \geq 0 \nonumber \\
 (w_d,[t]) &\models \neg[\varphi] &&\iff (w_d,[t]) \not\models [\varphi] \nonumber \\
 (w_d,[t]) &\models [\varphi_1] \land [\varphi_2] 
      &&\iff (w_d,[t]) \models [\varphi_1] \land (w_d,[t]) \models [\varphi_2] \nonumber 
 \end{align}
\begin{align}
  (w_d,[t]) \models [\varphi_1] \mathcal{U}_{[[a], [b]]} [\varphi_2]
  \iff\; \exists\; [t'] \in [[t+a], [t+b]]\; :~ \nonumber \\
    (w_d,[t']) \models [\varphi_2] \nonumber \\
    \text{and}~ \forall\; [t''] \in [[t], [t']): (w_d,[t'']) \models [\varphi_1]
\end{align}
 We can redefine other usual operators as syntactic abbreviations:
  \begin{align}
 [\varphi_1] \vee [\varphi_2] &:= \neg(\neg[\varphi_1] \land \neg[\varphi_2]) \\
 [\varphi_1] \implies [\varphi_2] &:= \neg[\varphi_1] \lor [\varphi_2] \\
    \lozenge_{[[a], [b]]} [\varphi] &:= \top \mathcal{U}_{[[a], [b]]} [\varphi] \label{eq:exist-to-until}\\
    \square_{[[a], [b]]}[\varphi] &:= \neg\lozenge_{[[a], [b]]} \neg[\varphi]
  \end{align}
\end{definition}

Untimed operators like $\mathcal U$, $\lozenge$ and $\square$ are just abbreviations of the bounded operators with the interval $[0, +\infty)$. Thus, they have the same semantics as the bounded operators.

\begin{theorem}
  \label{theorem:stl_to_sstl_satisfaction}
 If a trace $w$\footnote{Please see Appendix \ref{appendix:syntax_and_semantics_of_stl} for the definition of STL trace, syntax and semantics.} contains signals that satisfy the SIH and the trace also satisfies the STL formula $\varphi$, then for any $t \in \mathbb{R}_{\geq0}$ and $[t] \in \mathbb{N}_{\geq0}$, we have:
  \[
 (w,t) \models \varphi \iff (w_d,[t]) \models [\varphi]
  \]
 The proof of this theorem is provided in the Appendix \ref{appendix:stl_to_sstl_satisfaction}.
\end{theorem}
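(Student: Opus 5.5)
We argue by structural induction on the STL formula $\varphi$, following the grammar of Definition~\ref{def:sstl_formula}, and prove the stronger claim that the equivalence $(w,t) \models \varphi \iff (w_d,[t]) \models [\varphi]$ holds for \emph{every} subformula of $\varphi$ and every $t \in \mathbb{R}_{\geq 0}$. This gives a usable induction hypothesis. The hypothesis that $w$ satisfies $\varphi$ plays no role beyond the top level; the real assumption is the SIH of Definition~\ref{def:sih}. We also fix $w_d$ as the sampled trace $w_d(k) := w(k\Delta t)$, so that by the SIH $x_i^{w_d}([t]) = x_i^{w}(t)$ for all $t$.

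\textbf{Base and Boolean cases.} The case $\top$ is trivial. For an atomic predicate $x_i(t) \geq 0$, the SIH gives $x_i^w(t) = x_i^{w_d}([t])$, so the STL and SSTL predicate clauses coincide. Negation and conjunction follow immediately from the induction hypothesis, because both semantics treat $\neg$ and $\land$ pointwise at the same instant $t$ versus $[t]$.

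\textbf{Until case (main obstacle).} Let $\varphi = \varphi_1 \mathcal{U}_{[a,b]} \varphi_2$. We first establish a lemma: for every subformula $\psi$, the set $\{t : (w,t)\models\psi\}$ is a union of whole tick cells $[k\Delta t,(k+1)\Delta t)$. This follows from the induction hypothesis, since satisfaction at $t$ depends only on $[t]$.

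\emph{($\Rightarrow$)} Suppose a witness $t' \in [t+a, t+b]$ exists with $\varphi_2$ holding at $t'$ and $\varphi_1$ holding on $[t,t')$. Then $[t'] \in [[t+a],[t+b]]$ by monotonicity of $\lfloor\cdot\rfloor$. Every $k \in [[t],[t'])$ has a representative real time in $[t,t')$: take $t$ itself when $k=[t]$, and $k\Delta t$ otherwise. Hence $\varphi_1$ holds at every such $k$ by the induction hypothesis.

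\emph{($\Leftarrow$)} Given a discrete witness $k' \in [[t+a],[t+b]]$, we choose a real $t' \in [t+a,t+b] \cap [k'\Delta t,(k'+1)\Delta t)$. This intersection is nonempty, because the cell of $k'$ lies between the cells of $t+a$ and $t+b$. The cell lemma then transfers $\varphi_2$ to $t'$, and transfers $\varphi_1$ from the discrete points in $[[t],k')$ to all of $[t,t')$.

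The delicate part is the boundary cells. The real intervals $[t+a,t+b]$ and $[t,t')$ may cut a tick cell partially, and $[t+a]$ need not equal $[t]+[a]$. I would handle this by noting that the paper's SSTL clause uses $[[t+a],[t+b]]$, not $[t]+[a]$, and that the bounds are closed on the right in both logics. If a mismatch remains, for example when $t'$ and $t$ fall in the same cell so that $[[t],[t'])$ is empty while $[t,t')$ is not, the cell lemma resolves it: $\varphi_1$ on $[t,t')$ with $t' $ in the cell of $t$ imposes only the condition at cell $[t]$, and that condition is vacuous in the discrete clause. This is precisely the point to check carefully. If needed, I would choose $t' := \max(t+a, k'\Delta t)$ to keep the witness minimal, which keeps the $\varphi_1$ obligation aligned with $[[t],k')$.

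The derived operators $\lozenge$, $\square$, $\lor$ and $\implies$ then follow from their definitions as abbreviations.
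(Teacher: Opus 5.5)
Your skeleton matches the paper's: induction over the formula, the SIH for atoms, and pointwise Boolean cases. Your ($\Rightarrow$) step for Until is sound relative to the induction hypothesis. The gap is exactly the boundary-cell issue you flagged, and it cannot be closed, because the statement is false in this generality.

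Take $\Delta t=1$, $p := x_1\ge 0$ and $q := x_2\ge 0$, with $x_1=1$ on $[0,1)$, $x_1=-1$ on $[1,\infty)$, $x_2=-1$ on $[0,1)$ and $x_2=1$ on $[1,\infty)$. Both signals are constant on every tick cell, so the SIH holds. For $\varphi := p\,\mathcal{U}_{[1,1]}\,q$ we have $(w,0)\models\varphi$. However, $(w,0.5)\not\models\varphi$, since STL demands $p$ on all of $[0.5,1.5)$, which contains $1.2$. On the discrete side the window is $\{1\}$, whether read as $[[t+a],[t+b]]$ or as $[[t]+[a],[t]+[b]]$. Then $q$ holds at tick $1$ and the only $p$-obligation is at tick $0$, so $(w_d,[0.5])\models[\varphi]$.

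The same example refutes your cell lemma: satisfaction of an Until formula need not be constant on a tick cell. Nor does the lemma follow from the induction hypothesis. The paper's SSTL clause uses $[t+a]$ and $[t+b]$, which depend on $t$ and not only on $[t]$, so ``satisfaction depends only on $[t]$'' is unavailable once temporal operators are nested.

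The failure is structural. Suppose $t+a$ lies strictly inside the cell of the discrete witness $k'$ (this includes $k'=[t]$ with $a>0$). Then every admissible real witness satisfies $t'>k'\Delta t$. So the STL obligation ``$\varphi_1$ on $[t,t')$'' reaches into cell $k'$, which the discrete clause never inspects, since it quantifies only over $[[t],k')$. Your choice $t' := \max(t+a,k'\Delta t)$ yields $t'=t+a$ in exactly these cases. Your remark that the extra condition is ``vacuous in the discrete clause'' is therefore the mismatch itself rather than its resolution: it shows the discrete clause is strictly weaker.

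Evaluating only at tick instants does not help once Until is nested. With $r$ true on $[1,2)$, the formula $\varphi\,\mathcal{U}_{[1,1]}\,r$ fails in STL at $t=0$, because it needs $\varphi$ at $0.5$. It nevertheless holds in SSTL at tick $0$.

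The paper's appendix proof has the same hole. In the backward Until case it transfers $\varphi_1$ only to real times whose tick lies strictly below $[t']$. Its appeals to the SIH for compound subformulas also tacitly assume your cell lemma.

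A correct version needs extra hypotheses. One sufficient set is: all interval endpoints are multiples of $\Delta t$, and every Until whose left argument is not $\top$ has lower bound $0$. This still permits $\lozenge_{[a,b]}$ and $\square_{[a,b]}$. Under these hypotheses the satisfaction set of every subformula really is a union of tick cells, and your argument with $t' := \max(t,k'\Delta t)$ goes through.
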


\begin{example}
 Consider the heart model property presented in Equation \ref{eq:av_conduction}. Here, both the $A_{EGM}$ and $V_{EGM}$ signals satisfy SIH as they are produced and sampled at a fixed frequency of $F_s = 1/\Delta t = 1000$ Hz. Figure \ref{fig:sstl_heart_example} shows the difference in evaluation between STL and SSTL for this property. Suppose we are evaluating the satisfaction of the STL formula at $t_0 = 0.625$ seconds (shown by the blue dotted line), where $A_{EGM}>V_{a,th}$ becomes true. For STL, we must check if $A_{EGM}(t_0) > V_{a,th}$ at time $t_0$, then there must exist $t' \in [t_0+0.180, t_0+0.240]$ such that $V_{EGM}(t') > V_{v,th}$. The satisfaction check involves \emph{every} point in this real-valued interval. We can see that the STL formula is satisfied at $t_0 = 0.625$ seconds.

 For the SSTL case, assuming the signal is sampled at a fixed interval $\Delta t = 0.001$ seconds (e.g., 1 ms per tick), and we are evaluating at tick $[t_0] = \lfloor 0.625 / 0.001 \rfloor = 625$. The satisfaction check of the SSTL formula
  \[
 [\varphi_{AV}] := \square \left( A_{EGM} > V_{a,th} \rightarrow \lozenge_{[[0.180],[0.240]]} (V_{EGM} > V_{v,th}) \right)
  \]
requires projecting the formula time interval to the discrete domain. We need to check, for each tick in the interval $[625 + \lfloor 0.180/0.001 \rfloor, 625 + \lfloor 0.240/0.001 \rfloor]$ $= [625 + 180, 625 + 240] = [805, 865]$, whether $V_{EGM}$ exceeds $V_{v,th}$, provided $A_{EGM}$ exceeds $V_{a,th}$ at $[t_0]$. We can see that the SSTL formula is satisfied at $t_0 = 0.625$ seconds or $[t_0] = 625$ ticks.

 If the SIH property holds (i.e., the signal is invariant between two consecutive discrete ticks), then the satisfaction of the SSTL formula is equivalent to the satisfaction of the STL formula. This is because the satisfaction check at each discrete tick matches the satisfaction check over the corresponding real-valued interval, and vice versa. Additionally, while the STL evaluation necessitates checking every point in the real-valued interval, the SSTL evaluation only requires checking every discrete tick (indicated by green vertical lines in the figure).

\end{example}

\section{Verification of SSTL properties}
\label{sec:verification-sstl-properties}

To enable automated verification of \ac{SSTL} formulae through model
checking, we present a translation of \ac{SSTL} to LTL$_{\mathcal{P}}$ and then use
the SPIN model checker to verify  over state-based and kinematic models. We also show how
the LTL$_{\mathcal{P}}$ translation is a sound and complete abstraction of
\ac{SSTL}.

\subsection{\acf{LTL} with predicates}
\label{sec:acfltl-with-pred}

To translate \ac{SSTL} formulae to a form suitable for model checking,
we use \acf{LTL} with predicates (LTL$_{\mathcal{P}}$), taking inspiration from
the work presented in~\cite{kwon2008hybrid}, which extends traditional
LTL by integrating predicates defined over real-valued variables. The
LTL$_{\mathcal{P}}$ logic allows us to express temporal properties while
maintaining predicates over continuous or discrete-valued signals.

\begin{definition}
  \label{def:ltlp_syntax}
 (LTL$_{\mathcal{P}}$ Syntax) The syntax of LTL$_{\mathcal{P}}$ formulas $\phi$ is defined as follows:
  \begin{align}
    \phi &::=\top \mid P \mid \neg \phi \mid \phi_1 \land \phi_2 \mid \phi_1 \lor \phi_2 \mid \bigcirc \phi \mid \phi_1 \mathcal{U} \phi_2 \\
 P &::= a_1 y_1 + a_2 y_2 + \ldots + a_n y_n \bowtie b
  \end{align}
  
  \noindent
 where $P: \mathbb{R}^n \rightarrow \mathbb{B}$ is a predicate symbol representing a relational condition over variables $y_1, y_2, \ldots, y_n$. Specifically, $P$ takes the form $a_1 y_1 + a_2 y_2 + \ldots + a_n y_n \bowtie b$, where $a_1, \ldots, a_n$ are real-valued coefficients, $b$ is an offset, and $\bowtie$ is a relational operator chosen from $\{<, \leq, =, \geq, >\}$.

 The temporal operators are:
  \begin{itemize}
  \item $\bigcirc \phi$ (next): $\phi$ holds in the next state
  \item $\phi_1 \mathcal{U} \phi_2$ (until): $\phi_1$ holds until $\phi_2$ becomes true
  \end{itemize}

 The operators $\Box$ and $\Diamond$ can be defined as syntactic abbreviations:
  \begin{align}
    \Diamond \phi &:= \top \mathcal{U} \phi \\
    \Box \phi &:= \neg \Diamond \neg \phi
  \end{align}
\end{definition}

\begin{definition}
  \label{def:ltlp_semantics}
 (LTL$_{\mathcal{P}}$ Semantics): An LTL$_{\mathcal{P}}$ trace $\sigma$ is an infinite sequence
 of states, where each state $s_j, \forall j \in \mathbb{N}_{\geq0}$ assigns values to variables
  $y_1, \ldots, y_n$. For a state $s_{j}$, we write
  $s_{j}(y_1), \ldots, s_{j}(y_n)$ to denote the values of
 variables $y_1, y_2, \ldots, y_n$ in state $s_{i}$. The satisfaction
 relation $(\sigma, j) \models \phi$ (read as ``$\sigma$ satisfies
  $\phi$ at position $j$'') is defined recursively as follows:
  
  \begin{align}
 (\sigma, j) \models P &\iff a_1 s_{j}(y_1) + a_2 s_{j}(y_2) + \ldots + a_n s_{j}(y_n) \bowtie b \\
 (\sigma, j) \models \neg \phi &\iff (\sigma, j) \not\models \phi \\
 (\sigma, j) \models \phi_1 \land \phi_2 &\iff (\sigma, j) \models \phi_1 \text{ and } (\sigma, j) \models \phi_2 \\
 \end{align}
 \begin{align}
 (\sigma, j) \models \phi_1 \lor \phi_2 &\iff (\sigma, j) \models \phi_1 \text{ or } (\sigma, j) \models \phi_2 \\
 (\sigma, j) \models \bigcirc \phi &\iff (\sigma, j+1) \models \phi \\
 (\sigma, j) \models \phi_1 \mathcal{U} \phi_2 &\iff \exists k \geq j \text{ such that } (\sigma, k) \models \phi_2 \\
               &\quad \text{ and } \forall l \in [j, k), \sigma, l \models \phi_1 \\
 (\sigma, j) \models \Box \phi &\iff \forall k \geq j, (\sigma, k) \models \phi \\
 (\sigma, j) \models \Diamond \phi &\iff \exists k \geq j, (\sigma, k) \models \phi
  \end{align}
  
 We say that a sequence $\sigma$ satisfies an LTL$_{\mathcal{P}}$ formula
  $\phi$ (denoted $\sigma \models \phi$) if $\sigma, 0 \models \phi$.
\end{definition}

\begin{example}
 Consider the following LTL$_{\mathcal{P}}$ formula involving two variables $y_1$ and $y_2$ (where $y_1$ is temperature $x$ in Celsius and $y_2$ is voltage $v$ in volts):
  \[
    \phi := 2y_1 - y_2 \geq 10
  \]
 That is, $\phi$ is true in a state if $2$ times the temperature minus the voltage is at least $10$.

 Now, suppose we have an LTL$_{\mathcal{P}}$ trace $\sigma = s_0, s_1, s_2, \ldots$ where each state assigns values to these variables.
  
 Take as an example the state $s_5$ with $s_5(y_1) = 15$ and $s_5(y_2) = 18$. Substituting these values into the formula, we have:
  \[
    2 \cdot s_5(y_1) - s_5(y_2) = 2 \cdot 15 - 18 = 30 - 18 = 12
  \]
 Since $12 \geq 10$, the formula $\phi$ holds at state $s_5$, i.e., $(\sigma, 5) \models \phi$.
\end{example}

\subsection{SSTL to LTL translation }
\label{sec:sstl-ltl-translation}

\begin{figure}[htbp]
  \centering
  \includegraphics[width=\linewidth]{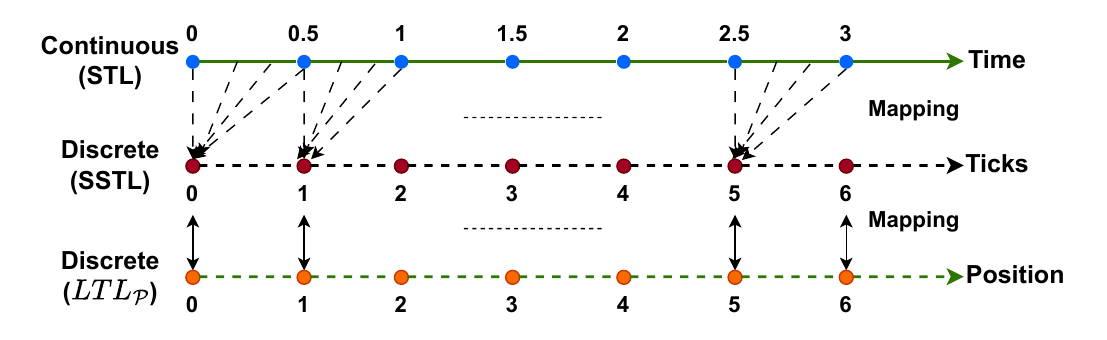}
  \caption{Illustration of the mapping between the continuous time and the discrete time assuming that the signal is sampled at a fixed interval $\Delta t = 0.5$ seconds}
  \label{fig:time_mapping}
\end{figure}

\begin{definition}
  \label{def:ltlp_sstl_time_bijection}
 (SSTL to LTL$_{\mathcal{P}}$ time bijection)
 We formalise the correspondence between positions in the SSTL trace and the LTL$_{\mathcal{P}}$ trace as follows. Let $w_d = (x_0^{w_d}, x_1^{w_d}, \ldots)$ be a discrete-time signal trace where $x_k^{w_d}$ is the value at discrete tick $k \in \mathbb{N}_{\geq0}$. Let $\sigma = (s_0, s_1, \ldots)$ be the corresponding LTL$_{\mathcal{P}}$ trace, where each $s_j$ gives the state at position $j$. By Definition~\ref{def:projection_operation}, we define a mapping $\pi : \mathbb{N}_{\geq 0} \to \mathbb{N}_{\geq 0}$ such that
  \begin{equation}
 j = \pi([t]) := [t]
  \end{equation}
 for each tick $[t]$. That is, the $x_{[t]}^{w_d}$-th value in $w_d$ corresponds to the $j$-th position in $\sigma$, yielding a bijection between the two traces at each tick. Formally,
  \[
    \forall [t] \in \mathbb{N}_{\geq 0}, \quad x[t]^{w_d} \leftrightarrow s_j
  \]

 Thus, the satisfaction of the SSTL formula $[\varphi]$ at time $[t]$ is equivalent to the satisfaction of the SSTL formula $[\varphi]$ at position $j$.

  $\square$
\end{definition}

\begin{definition}
    \label{def:ltlp_sstl_function_translation}
 (SSTL to LTL$_{\mathcal{P}}$ function)
 Let $w_d$ be an SSTL trace and let
    $\sigma$ be the same trace viewed as an LTL$_{\mathcal{P}}$ trace. Let the set of SSTL formulae be denoted by $\llbracket\varphi\rrbracket$ and that of LTL$_{\mathcal{P}}$ formula be denoted by $\llbracket\phi\rrbracket$. We write $\tau: \llbracket\varphi\rrbracket \rightarrow \llbracket\phi\rrbracket$ for the LTL$_{\mathcal{P}}$ formula $\phi$ obtained by translating an SSTL formula $[\varphi]$.
    \begin{equation}
      \phi = \tau([\varphi])
    \end{equation}
  $\square$
\end{definition}

\begin{definition}
  \label{def:ltlp_sstl_function_translation_unbounded}
 (SSTL to LTL$_{\mathcal{P}}$ function translation for unbounded intervals)
The translation for SSTL syntax cases can be divided into two parts - one that deals with formulae with unbounded intervals ($[0, \infty]$) and one that deals with formulae with bounded intervals ($[a, b]$). Given Definition \ref{def:ltlp_sstl_time_bijection}, there is a one-to-one correspondence between the SSTL trace and the LTL$_{\mathcal{P}}$ trace. The translation of the unbounded case using the function $\tau$ is straightforward and is as follows:

\begin{align}
  \tau\bigl(x_i^{w_d}([t])\ge0\bigr) &:= s_{j}(x_i^{w_d}) \ge 0 \\
  \tau(\neg[\varphi]) &:= \neg\tau([\varphi]),\\
  \tau([\varphi_1]\land[\varphi_2]) &:= \tau([\varphi_1])\land\tau([\varphi_2]),\\
  \tau\bigl([\varphi_1]\;\mathcal U\;[\varphi_2]\bigr)
  &:= \tau([\varphi_1])\;\mathcal U\;\tau([\varphi_2])\\
  \tau(\square[\varphi]) &:= \square\tau([\varphi])\\
  \tau(\lozenge[\varphi]) &:= \lozenge\tau([\varphi])
\end{align}

\end{definition}

To handle bounded time intervals in the LTL$_{\mathcal{P}}$ formula, we introduce a position index $j$ that denotes the current trace position (i.e., the current discrete tick). The index $j$ increases by one at every step. When a time-bounded obligation begins (for example, when evaluating $\Box(A\,\rightarrow\,\Diamond_{[a,b]}V)$ at a tick where $A$ holds, or at the entry tick of a bounded-Until), we capture the entry position as $j_0$ and keep it fixed for the duration of that particular obligation. If multiple obligations overlap in time, each has its own (conceptual) copy of $j_0$. With this, any interval $[a,b]$ with $a,b\in\mathbb{N}_{\ge0}$ can be expressed as a predicate over $j$ and $j_0$. We define the guard:

\begin{equation}
   \textsf{within}[a,b]\;:=\; (j_0+a \le j \le j_0+b),
\end{equation}
which is a plain LTL$_{\mathcal P}$ atomic predicate evaluated over the current state $s_j$ \footnote{Please check Appendix~\ref{appendix:implementation_details} for more details.}. 

\begin{lemma}
\label{lemma:boundedness-j0}
(Boundedness of the Number of $j_0$ Copies)\\
Let an SSTL formula be interpreted over a (possibly infinite) discrete trace, using its translation to LTL$_{\mathcal{P}}$ with time-bounded operators. Then, for any position $j$, the number of distinct active obligations (i.e., distinct conceptual $j_0$ copies that must be tracked at $j$) depends only on the number of active obligations at $j$, as determined by the structure of the formula and trace up to that point.

The proof sketch is given in Appendix~\ref{appendix:boundedness_j0}.
$\square$
\end{lemma}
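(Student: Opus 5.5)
The plan is to make the statement precise enough to bound: each bounded temporal subformula $[\psi_1]\mathcal{U}_{[[a],[b]]}[\psi_2]$ (and hence each $\lozenge_{[[a],[b]]}$ and $\square_{[[a],[b]]}$, via the abbreviations of Definition~\ref{def:SSTL_boolean_semantics}) gives rise to an obligation. A fresh copy of $j_0$ is created exactly at a position where evaluation of that subformula is requested, and the obligation is \emph{active} at $j$ iff it was opened at some $j_0\le j$ and has not yet been discharged or refuted. I would then show that the set of active copies at $j$ is a function of two things: the finite syntax tree of $[\varphi]$, and the prefix $s_0,\dots,s_j$ of $\sigma$. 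Under the time bijection of Definition~\ref{def:ltlp_sstl_time_bijection} this prefix is the same as $w_d$ up to tick $[t]=j$. With this in hand, the count ``depends only on the active obligations at $j$'' becomes a statement about this function.

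The key steps, in order, are as follows.

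\textbf{Step 1: lifetime of a copy.} By the guard $\textsf{within}[a,b]=(j_0+a\le j\le j_0+b)$ and the Until clause of the SSTL semantics, an obligation opened at $j_0$ is decided by position $j_0+[b]$ at the latest. Either $[\psi_2]$ is witnessed, or the window closes, or $[\psi_1]$ fails earlier. Hence the copy is inactive for every $j>j_0+[b]$.

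\textbf{Step 2: opening positions.} I will do this by structural induction on $[\varphi]$. A subformula $\theta$ at nesting depth $d$ under bounded operators with upper bounds $[b_1],\dots,[b_d]$ can only have obligations opened at positions $j_0\in[j-\sum_k [b_k],\,j]$ that are still active at $j$. The exception is an unbounded outer $\square$, which opens at most one new copy per tick. This uses the translation $\tau$ of Definition~\ref{def:ltlp_sstl_function_translation_unbounded}, which introduces no $j_0$ for unbounded operators.

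\textbf{Step 3: counting.} Active copies at $j$ are indexed by pairs (bounded subformula, opening position). Both components are determined by the formula and by which guards and predicates held on the prefix. This gives the claimed dependence, and in fact a uniform bound
\[
N(j)\;\le\;\sum_{\theta}\Bigl(1+\sum_{k}[b_k(\theta)]\Bigr),
\]
where the outer sum ranges over the bounded subformulas $\theta$, independent of $j$ and of the trace length.

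I expect the main obstacle to be Step 2, specifically nested bounded operators under an outer unbounded $\square$. There, inner obligations are spawned by every outer obligation at every tick, and one must argue that copies of the \emph{same} subformula opened at the same position coincide. Without that identification the count could grow multiplicatively.

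My first attempt at this is to key each copy solely by (subformula, $j_0$). Satisfaction of $\theta$ at $j_0$ is independent of which parent requested it, since by Definition~\ref{def:SSTL_boolean_semantics} it depends only on $(w_d,j_0)$. So the copies can be shared, and the window argument of Step 1 then bounds them additively.
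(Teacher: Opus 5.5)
Your skeleton is the paper's: a per-operator lifetime bound, at most one new copy per operator per tick, and a sum over the bounded operators. Two of your details are sharper than the paper's sketch. First, you take the lifetime to be $[b]+1$; the paper's width $b-a+1$ undercounts, because a copy is already needed on $[j_0,j_0+a)$. Second, you key copies by (subformula, $j_0$), which is what actually justifies ``one creation per position'' under nesting.

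The gap is Step~1 read with your own definition of activity. You call a copy active until its obligation is discharged or refuted, and you assert that this happens by $j_0+[b]$. That holds only when $[\psi_1]$ and $[\psi_2]$ can be decided at the current tick.
\begin{itemize}
\item Take $\square\,\lozenge_{[[0],[1]]}(\lozenge p)$ on a trace where $p$ never holds. The outer $\square$ opens a copy of the bounded operator with $j_0=j$ at every tick. None of these is ever discharged or refuted at a finite position. So at position $j$ there are $j+1$ active copies with pairwise distinct keys, and sharing cannot merge them.
\item With bounded operators only, take $\square\,\lozenge_{[[0],[1]]}\square_{[[0],[10]]}p$ with $p$ always true. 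The outer obligation opened at $j_0$ is discharged only at $j_0+10$. So about ten copies of it are active at once, against your allowance of two.
\end{itemize}
Step~2 does not repair this. The delay is governed by operators nested \emph{inside} $\theta$, whereas you sum the bounds of those enclosing it. And with an unbounded operand, as in the first example, no horizon-type correction yields a uniform bound. So your explicit bound on $N(j)$ fails under your own definitions.

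The repair is to count what the decidability argument actually uses, namely the $j_0$ values needed to evaluate guards. Declare a copy live while its $\textsf{within}[a,b]$ can still become true, i.e.\ for $j\in[j_0,j_0+[b]]$. After that the guard is false forever. Whatever remains of the obligation concerns only its operands' truth at positions up to $j_0+[b]$. That information is carried by the operands' own copies, or, for unbounded operands, by ordinary $j_0$-free LTL obligations.

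With this notion your Step~1 holds verbatim irrespective of nesting, and Step~2 becomes unnecessary. Your sharing argument then yields $N(j)\le\sum_\theta(1+[b_\theta])$. This guard-based notion is also the one the paper's sketch tacitly relies on when it ties an obligation's life to its window.
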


Using this idea, the bounded operators are derived as follows when evaluated at tick $[t]$ and position $j$. The other operators are translated in the same way as the unbounded case.

\begin{align}
  \tau\bigl([\varphi_1]\;\mathcal U_{[[a],[b]]}\;[\varphi_2]\bigr) &:= \tau([\varphi_1])\;\mathcal U\;\bigl(\tau([\varphi_2]) \land \textsf{within}[a,b]\bigr),\\
  \tau\bigl(\lozenge_{[[a],[b]]}[\varphi]\bigr) &:= \Diamond\bigl(\tau([\varphi]) \land \textsf{within}[a,b]\bigr),\\
  \tau\bigl(\square_{[[a],[b]]}[\varphi]\bigr) &:= \Box\bigl( \,\textsf{within}[a,b] \,\rightarrow \, \tau([\varphi])\bigr)
\end{align}

\begin{example}
 We illustrate the translation from SSTL to LTL$_{\mathcal{P}}$ for the Until operator, both in unbounded and time-bounded cases.

  \textbf{Formulas:}
  
  \begin{itemize}
    \item \emph{Unbounded Until:}
      \[
 [\varphi] := x_1^{w_d}([t]) \ge 0 \;\mathcal{U}\; x_2^{w_d}([t]) \ge 0
      \]
    \item \emph{Bounded Until (with interval $[5,10]$):}
      \[
 [\varphi] := x_1^{w_d}([t]) \ge 0 \;\mathcal{U}_{[[5],[10]]}\; x_2^{w_d}([t]) \ge 0
      \]
  \end{itemize}

  \textbf{Signals Table:}
  
 The concrete values of $x_1^{w_d}$ and $x_2^{w_d}$ over positions $j = [t]$ are as follows:

  \begin{center}
    \begin{tabular}{c|ccccccccccc}
      $[t]$ & 0 & 1 & 2 & 3 & 4 & 5 & 6 & 7 & 8 & 9 & 10 \\
      \hline
      $x_1^{w_d}$ & 1 & 1 & 1 & 0.5 & 0.8 & 0.2 & 1 & 0.5 & 0.2 & -1 & -0.7 \\
      $x_2^{w_d}$ & -1 & -1 & -0.8 & -0.6 & -0.5 & -0.1 & -0.15 & 0.6 & 1 & 1 & 0.8 \\
    \end{tabular}
  \end{center}

  \textbf{Translations:}
  
  \begin{itemize}
    \item \emph{Unbounded Until:}
      \begin{align*}
        \tau([\varphi]) &= \tau\bigl(x_1^{w_d}([t]) \ge 0\bigr) \;\mathcal{U}\; \tau\bigl(x_2^{w_d}([t]) \ge 0\bigr) \\
        &= (s_j(x_1^{w_d}) \ge 0) \;\mathcal{U}\; (s_j(x_2^{w_d}) \ge 0)
      \end{align*}
    \item \emph{Bounded Until:}
      \begin{align*}
        \tau([\varphi]) &= \tau\bigl(x_1^{w_d}([t]) \ge 0\bigr) \;\mathcal{U}\; \bigl(\tau\bigl(x_2^{w_d}([t]) \ge 0\bigr) \land \textsf{within}[5,10]\bigr) \\
        &= (s_j(x_1^{w_d}) \ge 0) \;\mathcal{U}\; \bigl((s_j(x_2^{w_d}) \ge 0) \land \textsf{within}[5,10]\bigr)
      \end{align*}
 where $\textsf{within}[5,10] := (j_0+5 \le j \le j_0+10)$, ensuring that $x_2^{w_d} \ge 0$ must become true between ticks 5 and 10, counted from the start of the Until obligation.
  \end{itemize}

  \textbf{Truth Table and Explanation:}

 The following table gives, for each position $j$, the truth values for all subformulas, the time window predicate, and both the unbounded and bounded Until formulas.

  \begin{center}
    \renewcommand{\arraystretch}{1.2}
    \begin{tabular}{p{3cm}|ccccccccccc}
      $j = [t]$ & 0 & 1 & 2 & 3 & 4 & 5 & 6 & 7 & 8 & 9 & 10 \\
      \hline
      \hline
      $\varphi_1 = x_1^{w_d}[t]\geq0$ & 1 & 1 & 1 & 1 & 1 & 1 & 1 & 1 & 1 & 0 & 0 \\
      $\varphi_2 = x_2^{w_d}[t]\geq0$ & 0 & 0 & 0 & 0 & 0 & 0 & 0 & 1 & 1 & 1 & 1 \\
      \hline
      $\varphi_1\ \mathcal U\ \varphi_2$ & 1 & 1 & 1 & 1 & 1 & 1 & 1 & 1 & 1 & 0 & 0 \\
      \hline
      $\textsf{within}[5,10]$ & 0 & 0 & 0 & 0 & 0 & 1 & 1 & 1 & 1 & 1 & 1 \\
      $\tau(\varphi_2) \land \textsf{within}[5,10]$ & 0 & 0 & 0 & 0 & 0 & 0 & 0 & 1 & 1 & 1 & 1 \\
      $\tau(\varphi_1)\;\mathcal U\;(\tau(\varphi_2) \land \textsf{within}[5,10])$ & 0 & 0 & 1 & 1 & 1 & 1 & 0 & 0 & 0 & 0 & 0 \\
      \hline
      $\varphi_1\ \mathcal U_{[5,10]}\ \varphi_2$ & 0 & 0 & 1 & 1 & 1 & 1 & 0 & 0 & 0 & 0 & 0 \\
    \end{tabular}
  \end{center}

  \emph{Explanation}: 
  \begin{itemize}
    \item The first two rows record where $\varphi_1$ ($x_1^{w_d}[t]\geq0$) and $\varphi_2$ ($x_2^{w_d}[t]\geq0$) are true.
    \item The third row computes the unbounded Until over the trace: as soon as $\varphi_1$ breaks at $j=9$, the Until becomes false.
    \item The predicate $\textsf{within}[5,10]$ is true only from ticks 5 to 10, marking the time window for the bounded Until.
    \item $\tau(\varphi_2) \land \textsf{within}[5,10]$ shows at which ticks both the second signal is non-negative and we are within the window. 
    \item The row $\tau(\varphi_1)\;\mathcal U\;(\tau(\varphi_2) \land \textsf{within}[5,10])$ gives the satisfaction of the bounded-Until LTL$_{\mathcal{P}}$ translation.
    \item The final row ($\varphi_1\ \mathcal U_{[5,10]}\ \varphi_2$) matches the translation, showing that the SSTL semantics and LTL$_{\mathcal{P}}$ translation agree.
  \end{itemize}
 The bounded temporal operator enforces that $x_2^{w_d}\ge0$ must hold between ticks $j_0+5$ and $j_0+10$, with $x_1^{w_d}\ge0$ true at each prior tick, otherwise the bounded Until fails.
\end{example}

\subsubsection{Nesting of Formulas}
\label{sec:nesting-formulas}

The translation function $\tau$ is applied recursively to handle nested formulas. When a formula contains subformulas, each subformula is translated independently, and the results are combined according to the structure of the parent formula. This recursive approach ensures that complex nested temporal formulas are correctly translated to LTL$_{\mathcal{P}}$.

For example, consider a nested formula where a bounded operator contains another temporal operator:
\[
[\varphi] := \square_{[[0],[20]]} \bigl(x_1^{w_d}([t]) \ge 0 \;\mathcal{U}\; x_2^{w_d}([t]) \ge 0\bigr)
\]

The translation proceeds recursively:
\begin{align*}
  \tau([\phi]) &= \tau\bigl(\square_{[[0],[20]]} \bigl(x_1^{w_d}([t]) \ge 0 \;\mathcal{U}\; x_2^{w_d}([t]) \ge 0\bigr)\bigr) \\
  &= \Box\bigl(\textsf{within}[0,20] \rightarrow \tau\bigl(x_1^{w_d}([t]) \ge 0 \;\mathcal{U}\; x_2^{w_d}([t]) \ge 0\bigr)\bigr) \\
  &= \Box\bigl(\textsf{within}[0,20] \rightarrow \bigl(\tau(x_1^{w_d}([t]) \ge 0) \;\mathcal{U}\; \tau(x_2^{w_d}([t]) \ge 0)\bigr)\bigr) \\
  &= \Box\bigl(\textsf{within}[0,20] \rightarrow \bigl((s_j(x_1^{w_d}) \ge 0) \;\mathcal{U}\; (s_j(x_2^{w_d}) \ge 0)\bigr)\bigr)
\end{align*}

When multiple bounded operators are nested, each introduces its own time constraint with a separate $j_0$ value. For instance, in a formula like $\lozenge_{[[a],[b]]} \square_{[[c],[d]]} \phi$, the outer $\lozenge$ operator captures $j_0$ when it becomes active, and the inner $\square$ operator captures its own $j_0'$ when it becomes active within the outer operator's scope. This ensures that each bounded temporal operator correctly tracks its own time interval relative to its activation point.

\begin{theorem}
  \label{theorem:sstl_to_ltlp_correctness}
 If we have a SSTL trace $w_d$ and an LTL$_{\mathcal{P}}$ trace $\sigma$ such that the SSTL formula $[\varphi]$ is translated to the LTL$_{\mathcal{P}}$ formula $\tau([\varphi])$, using the translation function $\tau$ from Definition \ref{def:ltlp_sstl_function_translation}, then:
  \[
 (w_d,[t]) \models [\varphi] \quad\iff \quad (\sigma,j) \models \tau([\varphi])
  \]

 The proof of this theorem is provided in the Appendix \ref{appendix:sstl_to_ltlp_correctness}.
\end{theorem}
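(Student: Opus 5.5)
The argument is a structural induction on the SSTL formula $[\varphi]$, carried out under the position correspondence $j=\pi([t])=[t]$ of Definition~\ref{def:ltlp_sstl_time_bijection}. That definition identifies the state $s_j$ with the sample $x^{w_d}([t])$, so $s_j(x_i^{w_d}) = x_i^{w_d}([t])$ for every $i$. The claim to prove by induction is stronger than the bare statement. For every subformula $[\psi]$ and every $[t]$, $(w_d,[t])\models[\psi]$ holds if and only if $(\sigma,[t])\models\tau([\psi])$. Here every bounded operator occurring in $\tau([\psi])$ is evaluated with its anchor $j_0$ set to the position at which that operator's obligation is opened. The theorem is the instance $[\psi]=[\varphi]$. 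This anchoring convention is exactly what the paper describes before Lemma~\ref{lemma:boundedness-j0}. That lemma ensures each obligation carries its own copy of $j_0$, so in the induction we may treat $j_0$ as a parameter bound to the evaluation point of the enclosing bounded operator.

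\textbf{Non-temporal cases.} For $\top$ and atoms, the equality $s_j(x_i^{w_d})=x_i^{w_d}([t])$ together with the two semantic clauses (Definition~\ref{def:SSTL_boolean_semantics} and Definition~\ref{def:ltlp_semantics}) gives the equivalence directly. Negation and conjunction follow from the induction hypothesis because $\tau$ is homomorphic on them.

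\textbf{Unbounded until.} The case $[\varphi_1]\,\mathcal U\,[\varphi_2]$ with interval $[0,\infty)$ is also immediate. The SSTL clause quantifies over $[t']\ge[t]$ with $[\varphi_1]$ holding on $[[t],[t'])$, which is verbatim the LTL$_{\mathcal P}$ until clause at $j=[t]$ once we substitute $k=[t']$ and apply the induction hypothesis pointwise. $\lozenge$ and $\square$ follow as abbreviations via \eqref{eq:exist-to-until}.

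\textbf{Bounded until.} This is the main step. Set $j_0=[t]$.

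For ($\Rightarrow$), take the witness $[t']\in[[t+a],[t+b]]$ from the SSTL semantics and let $k=[t']$. The induction hypothesis gives $(\sigma,k)\models\tau([\varphi_2])$, and it gives $\tau([\varphi_1])$ at every $l\in[j_0,k)$. It remains to show that $\textsf{within}[a,b]$ holds at $k$, i.e.\ $j_0+[a]\le k\le j_0+[b]$. This requires reconciling $[t+a]$ with $[t]+[a]$. Under the floor definition these can differ by one. The plan is to use the discrete reading in which $t$ is a tick instant (so $t=[t]\Delta t$), which gives $[t+a]=[t]+[a]$ exactly. The interval arguments of $\textsf{within}$ are to be read as $[a],[b]$, which matches the bounded translation clauses and the worked example.

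For ($\Leftarrow$), take the least $k\ge j_0$ satisfying $\tau([\varphi_2])\wedge\textsf{within}[a,b]$ that is given by the LTL$_{\mathcal P}$ until semantics. Then $\textsf{within}$ places $k$ in the SSTL window, and the $\varphi_1$-prefix transfers back by the induction hypothesis.

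\textbf{Derived bounded operators.} The bounded $\lozenge$ case is the special case $\varphi_1=\top$. For bounded $\square$, unfold it as $\neg\lozenge\neg$. The translated form $\Box(\textsf{within}\to\tau)$ is then justified by checking that it is equivalent to $\neg\Diamond(\neg\tau\wedge\textsf{within})$, which is a propositional identity under $\Box=\neg\Diamond\neg$.

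\textbf{Expected main obstacle.} The hard part is making the $j_0$ binding rigorous when bounded operators are nested, for example $\lozenge_{[a,b]}\square_{[c,d]}$. There the inner obligation's anchor must be the position at which the inner formula is evaluated, not the outer anchor. I would handle this by making $\tau$ parametric in an anchor variable per bounded-operator occurrence, $\tau_{j_0}$. The induction hypothesis is then stated for all anchor values, and each bounded case instantiates its own fresh anchor to the current position. Lemma~\ref{lemma:boundedness-j0} would be cited to argue that this bookkeeping is well defined at every position.
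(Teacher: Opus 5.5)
Your proposal is correct and follows essentially the same route as the paper's appendix proof: structural induction on $[\varphi]$ under the correspondence $j=[t]$, with $j_0$ fixed to the evaluation point of each bounded operator so that the SSTL window and the $\textsf{within}$ guard coincide. Your explicit handling of the $[t+a]$ versus $[t]+[a]$ floor mismatch (by evaluating at tick instants) and of per-occurrence anchor binding makes rigorous two steps the paper passes over silently, and Lemma~\ref{lemma:boundedness-j0} is not actually needed for this equivalence, only for finiteness; deriving bounded $\lozenge$ and $\square$ from bounded until, rather than proving them as separate cases, is only an organisational difference.
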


\begin{lemma}[Decidability of SSTL Verification]
  \label{remark:decidability}
 The verification problem for SSTL is decidable. This follows directly from Theorem~\ref{theorem:sstl_to_ltlp_correctness} and the decidability of LTL model checking. Since every SSTL formula $[\varphi]$ can be translated to an equivalent LTL$_{\mathcal{P}}$ formula $\tau([\varphi])$, and since LTL$_{\mathcal{P}}$ model checking over finite-state systems is decidable~\cite{holzmann_spin_2011}, SSTL verification is also decidable.
  
\end{lemma}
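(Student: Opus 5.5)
The plan is a reduction argument: I would turn an instance of the SSTL verification problem into an instance of LTL$_{\mathcal{P}}$ model checking over a finite-state system, and then invoke the known decidability of the latter. First I would fix what an instance is. It consists of a system $M$ whose executions, sampled at ticks, generate SSTL traces $w_d$, together with an SSTL formula $[\varphi]$. The question is whether every trace of $M$ satisfies $[\varphi]$ at $[t]=0$. Following the paper's setting, I take $M$ to be finite-state, or at least to have a finite-state abstraction in which each atomic predicate $x_i^{w_d}([t])\ge 0$ is determined by the current state. This is how SPIN models are built, and without such an assumption the claim cannot hold in general.

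Second, I would construct the LTL$_{\mathcal{P}}$ instance. By Definition~\ref{def:ltlp_sstl_time_bijection}, each trace $w_d$ of $M$ corresponds to the LTL$_{\mathcal{P}}$ trace $\sigma$ with $s_j \leftrightarrow x^{w_d}_{[t]}$ and $j=[t]$. I then apply $\tau$ from Definition~\ref{def:ltlp_sstl_function_translation} and the bounded-operator clauses. Theorem~\ref{theorem:sstl_to_ltlp_correctness} gives $(w_d,[t])\models[\varphi] \iff (\sigma,j)\models\tau([\varphi])$ at every position. Quantifying over all traces of $M$ at position $0$, $M$ satisfies $[\varphi]$ exactly when the induced LTL$_{\mathcal{P}}$ system satisfies $\tau([\varphi])$. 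The translation is a structural recursion, so $\tau([\varphi])$ is computable and finite.

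Third, I would make sure the induced system is genuinely finite-state, because the within guards refer to the counters $j$ and $j_0$, which are unbounded. My approach is to replace each counter by a bounded quantity. For each bounded operator I would track only the offset $j-j_0$, saturated at $b+1$, since within$[a,b]$ depends only on whether $j-j_0\in[a,b]$ and all offsets above $b$ behave the same. Lemma~\ref{lemma:boundedness-j0} says the number of active $j_0$ copies is governed by the formula structure. I would strengthen this to a concrete bound: an obligation for an operator with upper bound $b$ started more than $b$ ticks ago is either discharged or violated, so at most $b+1$ copies are live per operator. The product of $M$ with these saturated counters is therefore finite. Predicates over real-valued variables must also range over finitely many values, which holds because $M$ is finite-state, so each $P$ becomes a Boolean label on states.

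Finally, LTL model checking of a finite Kripke structure is decidable, for example by the automata-theoretic construction used in SPIN \cite{holzmann_spin_2011}, and that completes the reduction. I expect the third step to be the main obstacle. The paper's lemma on $j_0$ copies is only qualitative, and the within guards as written use unbounded counters. The key work is to show that the saturated-offset encoding preserves the semantics of $\tau([\varphi])$, including nested bounded operators. I would prove this by induction on the formula, following the same case structure as Theorem~\ref{theorem:sstl_to_ltlp_correctness}.
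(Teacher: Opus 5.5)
Your route is the paper's. You reduce, via Theorem~\ref{theorem:sstl_to_ltlp_correctness}, to LTL$_{\mathcal P}$ model checking on a finite system, and get finiteness from finite variable domains plus a bound on live $j_0$ copies. Saturating $j-j_0$ at $b+1$, with $b+1$ live copies per operator, is already more careful than the paper's sketch, which keeps $j$ and $j_0$ as unbounded integers. The gap is in the step you flagged, and the induction you propose cannot succeed for $\tau([\varphi])$ as written.

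The guard $\textsf{within}[a,b]$ is not a state predicate. Its $j_0$ is frozen at the position where the enclosing bounded subformula is evaluated, and Theorem~\ref{theorem:sstl_to_ltlp_correctness} is proved for exactly this reading, $j_0=[t]$. A bounded operator under $\square$ opens obligations at infinitely many positions, so your $b+1$ copies must be recycled. A reset destroys the absorbing ``expired'' counter value on which the unbounded operators that $\tau$ wraps around the guard depend. Take $\square(A\rightarrow\lozenge_{[0,1]}V)$. The subformula $\Diamond(\tau(V)\land\textsf{within}[0,1])$, evaluated at an $A$-position $k$, is witnessed by any $V$ at a later $k'>k+1$ where the recycled counter is again $0$ or $1$. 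So checking on the product can report satisfaction on traces that violate the SSTL property. Dually, $\Box(\textsf{within}[a,b]\rightarrow\tau(\psi))$ acquires spurious obligations after a reset. Also, $\tau([\varphi])$ has a single $\textsf{within}$ atom per operator and cannot refer to ``the copy opened at the current evaluation position.'' The checked formula must change, not only the model. The paper's appeal to Lemma~\ref{lemma:boundedness-j0} passes over the same point, so following it more closely will not close the gap.

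The simplest repair removes $j_0$ altogether. Since the tick bounds are natural numbers,
\[
\lozenge_{[a,b]}\psi\equiv\bigvee_{k=a}^{b}\bigcirc^{k}\psi,
\qquad
\psi_1\,\mathcal U_{[a,b]}\,\psi_2\equiv\bigvee_{k=a}^{b}\Bigl(\bigcirc^{k}\psi_2\land\bigwedge_{l=0}^{k-1}\bigcirc^{l}\psi_1\Bigr).
\]
Apply this recursively, treating $\square_{[a,b]}$ as the dual. The result is a plain LTL$_{\mathcal P}$ formula over the original state predicates, using the $\bigcirc$ operator already in the syntax. Its equivalence with the SSTL semantics of Definition~\ref{def:SSTL_boolean_semantics} is immediate, reading the window as $[j+a,j+b]$, which is what $\textsf{within}[a,b]$ encodes. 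It can be checked directly on the finite system $M$. Decidability then follows from finite-state LTL model checking, with no counters and no appeal to Lemma~\ref{lemma:boundedness-j0}. The formula grows with the bounds, which does not affect decidability.
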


\begin{proof}[Proof sketch]
 The result follows by reduction to LTL$_{\mathcal{P}}$ model checking via
 Theorem~\ref{theorem:sstl_to_ltlp_correctness}.  For every SSTL formula
  $[\varphi]$ we can construct an equivalent LTL$_{\mathcal{P}}$ formula
  $\tau([\varphi])$.  Model checking LTL over finite-state systems is
 decidable: given a transition system~$M$ and an LTL$_{\mathcal{P}}$ formula~$\phi$, one
 builds a B\"uchi automaton $\mathcal{A}_{\neg\phi}$ for the negation of the
 specification and checks whether $L(M) \cap L(\mathcal{A}_{\neg\phi}) =
  \emptyset$; the corresponding emptiness test is decidable and can be
 performed in time linear in the size of the product automaton
  \cite{holzmann_spin_2011}.

 Two additional arguments ensure that the reduction indeed yields a
  \emph{finite} product automaton:
  \begin{enumerate}
    \item LTL$_{\mathcal{P}}$ extends LTL with arithmetic
 predicates over program variables. In our setting, every variable
 ranges over a fixed, finite domain (bounded integers or
 fixed--precision reals), so the alphabet of valuations remains
 finite, and the automata produced by the translation are
 finite--state.
    \item The SSTL~$\rightarrow$~LTL$_{\mathcal{P}}$ translation introduces
 guards of the form $\textsf{within}[a,b] := (j_0+a \le j \le j_0+b)$
 that rely on a bookkeeping index~$j_0$. By
 Lemma~\ref{lemma:boundedness-j0}, at any position~$j$ only finitely
 many distinct $j_0$ copies are active, bounded by the structure of
 the formula and the widths of its time intervals. Hence, the state
 space required to evaluate such guards is also finite.
  \end{enumerate}

 With these observations, the model-checking product remains finite-state,
 and the decidability of SSTL verification thereof.\qedhere
\end{proof}

SSTL translation uniformly handles both safety and liveness properties:
\begin{itemize}
  \item \textbf{Safety properties} (``something bad never happens'') are expressed using $\Box$ and negation, and correspond to B\"uchi automata that reject traces violating the safety condition.
  \item \textbf{Liveness properties} (``something good eventually happens'') are expressed using $\Diamond$, and correspond to B\"uchi automata with acceptance conditions ensuring eventual satisfaction.
\end{itemize}

Furthermore, the complexity of SSTL model checking inherits the complexity bounds of LTL model checking: PSPACE-complete in the size of the formula and polynomial in the size of the model. In practice, modern model checkers like SPIN employ sophisticated optimisation techniques (partial-order reduction, symmetry reduction, state compression) that make verification tractable for realistic system models.

\begin{table*}[htbp]
  \centering
  \caption{SSTL properties verified across three case studies. Properties were verified using SPIN with verification times shown.}
  \label{tab:sstl-properties-all-cases}
  \begin{tabular}{llp{9cm}cc}
    \toprule
    \textbf{System} & \textbf{Property} & \textbf{SSTL Formula} & \textbf{Result} & \textbf{Time} \\
    \midrule
    \multirow{7}{*}{\begin{tabular}[c]{@{}l@{}}Traffic\\Light\end{tabular}} 
    & $[\varphi_{mutex}]$ & $\square \neg(NS_{green} = 1 \land EW_{green} = 1)$ & Satisfied & 1.713s \\
    & $[\varphi_{NS\_safe}]$ & $\square(NS_{green} = 1 \rightarrow EW_{red} = 1)$ & Satisfied & 1.502s \\
    & $[\varphi_{EW\_safe}]$ & $\square(EW_{green} = 1 \rightarrow NS_{red} = 1)$ & Satisfied & 1.624s \\
    & $[\varphi_{fairness}]$ & $\square \lozenge (NS_{green} = 1)$ & Not Satisfied & 1.559s \\
    & $[\varphi_{liveness\_NS}]$ & $\lozenge (NS_{green} = 1)$ & Satisfied & 1.475s \\
    & $[\varphi_{response\_NS\_yellow}]$ & $\square(NS_{green} = 1 \rightarrow \lozenge (NS_{yellow} = 1))$ & Satisfied & 1.539s \\
    & $[\varphi_{bounded\_NS\_yellow}]$ & $\square(NS_{green} = 1 \rightarrow \lozenge_{[[3],[5]]} (NS_{yellow} = 1))$ & Satisfied & 2.093s \\
    \midrule
    \multirow{4}{*}{\begin{tabular}[c]{@{}l@{}}Pedestrian\\Crossing\end{tabular}} 
    & $[\varphi_{no\_conflict}]$ & $\square \neg(cars_{green} = 1 \land walk_{signal} = 1)$ & Satisfied & 275.288s \\
    & $[\varphi_{queue\_bounded}]$ & $\square(waiting_{peds} \leq 5 \land waiting_{peds} \geq 0)$ & Satisfied & 309.154s \\
    & $[\varphi_{threshold\_walk}]$ & $\square(waiting_{peds} \geq 2 \rightarrow \lozenge (walk_{signal} = 1))$ & Not Satisfied & 2.293s \\
    & $[\varphi_{bounded\_wait}]$ & $\square((cars_{green} = 1 \land waiting_{peds} \geq 2) \rightarrow \lozenge_{[[2],[5]]} (walk_{signal} = 1))$ & Not Satisfied & 1.627s \\
    \midrule
    \multirow{4}{*}{\begin{tabular}[c]{@{}l@{}}Healthy\\heart\\model\end{tabular}} 
    & $[\varphi_{AV}]$ & $\square(A_{EGM} \geq V_{a,th} \rightarrow \lozenge_{[[0.180],[0.240]]} (V_{EGM} \geq V_{v,th}))$ & Satisfied & 65.879s  \\
    & $[\varphi_{VV}]$ & $\square(V_{EGM} \geq V_{v,th} \rightarrow \lozenge_{[[0.6],[1.00]]} (V_{EGM} > V_{v,th}))$ & Satisfied &57.497s \\
    & $[\varphi_{liveness\_A}]$ &$\lozenge (A_{EGM} > V_{a,th})$ & Satisfied & 6.757s \\
    & $[\varphi_{liveness\_V}]$ &$\lozenge (V_{EGM} > V_{v,th})$ & Satisfied & 6.437s  \\
    \bottomrule
  \end{tabular}
\end{table*}

\subsection{SSTL verification procedure}

Given an SSTL formula $[\varphi]$, we apply the translation function $\tau$ recursively to obtain an equivalent LTL$_{\mathcal{P}}$ formula $\phi = \tau([\varphi])$ and then verify the property $\phi$ on a model of the system with the SPIN model checker \cite{holzmann_spin_2011}.

\subsubsection{SPIN support for LTL$_{\mathcal{P}}$}

A crucial distinction between standard LTL and LTL$_{\mathcal{P}}$ is that
LTL$_{\mathcal{P}}$ atomic propositions can be arbitrary predicates over program
state variables, including arithmetic and relational expressions. While
classical LTL is defined over propositional atoms from a fixed set $AP$,
LTL$_{\mathcal{P}}$ allows predicates of the form $s_j(x_i) \geq 0$,
$(j_0 + a \leq j \leq j_0 + b)$, and compound expressions involving arithmetic
operations over integer and floating-point variables.

SPIN natively supports this extended form of LTL by allowing any valid
Promela (SPIN's programming language) boolean expression as an atomic proposition within an
LTL formula. When SPIN evaluates an LTL formula during model
checking, it evaluates each predicate in the context of the current
global state, where all Promela variables have concrete values. For
instance, the predicate $(A_{EGM} > A_{EGM,th})$ evaluates
to true or false based on the values of $A_{EGM}$ and
$A_{EGM,th}$ in the current state. Similarly, timing
constraints such as those shown in Equation \ref{eq:av_conduction} are evaluated
directly as arithmetic expressions over state variables.

It is important to note that while SPIN evaluates predicates containing arithmetic expressions, the underlying model checker still operates over a finite state space. Therefore, all variables appearing in LTL predicates must have finite domains (e.g., bounded integers or floating-point values with finite precision). We use three decimal places for floating-point values and convert them to integers by multiplying by 1000 and rounding to the nearest integer. This simple approach ensures that the state space is finite. Under SIH, the discrete-time trace faithfully represents the continuous-time trace (no information loss between ticks), so the SSTL semantics and the STL--SSTL equivalence are \emph{exact} for the \emph{continuous}-time trace. The rounding step is an \emph{implementation} requirement for finite-state model checking. Verification in SPIN is therefore exact with respect to the \emph{quantised} trace (rounded to the chosen precision), not the original real-valued trace. Increasing the number of decimal places (e.g., scaling by 10000 instead of 1000) refines the approximation at the cost of a larger state space.

\subsubsection{System Modelling in Promela}

We model the system in Promela as a combination of a finite-state machine (FSM) encoding the system's control logic and C code blocks encapsulating the continuous dynamics. The primary time reference in the model is a discrete tick counter variable, incremented on each Promela execution step, ensuring all sampled variables and events align to this discrete timeline. FSM states represent operational phases (such as "wait for atrium" or "wait for ventricle" in heart models), and transitions are triggered based on signal values or the passage of specific ticks/times. 

Physical signal computations are implemented in embedded C code, if needed, which Promela integrates via $c\_decl$, $c\_state$, and $c\_code$ blocks. Promela synchronises with these C routines at each tick: the simulation advances, critical state variables (such as electrogram signals or event timestamps) are updated and made visible to property checks, and the FSM logic reacts to new values. Execution bounds, such as a maximum tick count or event threshold, are enforced to guarantee a finite state space for verification. Based on the LTL$_{\mathcal{P}}$ property, we also define and track $j_0$ as the tick counter variable that marks the start of a time-bounded obligation.

\subsubsection{Verification with SPIN}

SPIN performs verification by constructing the product (intersection) of the system model—represented as a finite-state machine (FSM) defined in Promela—and the Büchi automaton corresponding to the negation of the given LTL$_{\mathcal{P}}$ property. The workflow is as follows: the user encodes the system as a Promela FSM (including any embedded C code for dynamics), and specifies the LTL$_{\mathcal{P}}$ property using the \texttt{ltl} declaration. SPIN internally translates the negation of this LTL formula into a Büchi automaton. 

During verification, SPIN explores the product of the system FSM and this property automaton, searching for counterexamples—that is, executions in which the system behaviour intersects with the set of behaviours violating the property. This approach enables SPIN to identify infinite executions (cycles) in the product automaton that witness property violations (for example, infinite occurrence of error states, or failure to satisfy a liveness guarantee). 

All reachable behaviours within the finite state space are examined. If an intersection is discovered—i.e., if there exists a trace accepted by the product of the system model and the automaton for the negated property—SPIN produces a counterexample trace for user inspection. Otherwise, if no such intersection exists, the property is verified to hold for all possible executions of the model, thus ensuring both safety and liveness with respect to the system under analysis.

\section{Evaluation}
\label{sec:evaluation}

In this section, we evaluate the effectiveness of our SSTL framework through three case studies of increasing complexity: a traffic light controller, a pedestrian crossing system, and a heart model. Each case study demonstrates how SSTL can express and verify critical temporal properties in various systems, from basic safety and liveness constraints to complex physiological dynamics with integer-valued signals. All case studies execute synchronously.

\begin{table*}[ht]
  \centering
  \caption{Discretisation methods vs. our SSTL (concise view)}
  \begin{tabular}{llll}
    \toprule
    \textbf{Work} & \textbf{Time handling} & \textbf{Key assumption/tool} & \textbf{Limitation vs. SSTL} \\
    \midrule
Fainekos\&Pappas~\cite{fainekos_robustness_2009} & Discrete check of CTL/MTL & Robustness, bounded sampling & Needs robustness margins; extra assumptions \\
Deshmukh et al.~\cite{bartocci_specification-based_2018} & Piecewise-constant interp. & RoSI, partial traces & Interp. choice affects semantics \\
Chen\&Francis~\cite{chen_optimal_1995} & Uniform discrete-time STL & Fixed sampling over $\mathbb{N}_{\ge0}$ & Breaks explicit link to dense-time STL \\
Donz\'e et al.~\cite{donze_efficient_2013} & Mixed-time with interfaces & $@_{cd}$/$@_{dc}$ operators & Interface discipline adds complexity \\
Yang et al.~\cite{yang_continuous-time_2020} & Uniform discretisation for planning & Robust semantics, MIQP & Optimization-oriented; not equivalence \\
    \bottomrule
  \end{tabular}
  \label{tab:discretisation-landscape}
\end{table*}

Table~\ref{tab:sstl-properties-all-cases} presents the key SSTL properties verified for each case study. All properties were verified using the SPIN model checker with the translation procedure described in Section~\ref{sec:sstl-ltl-translation}. The translation is performed manually. The table shows the SSTL properties, the result of the verification, and the time taken to verify the property.

The verification results demonstrate SSTL's effectiveness across diverse application domains with increasing complexity. The traffic light controller is a finite state discrete system and demonstrates verification (of simple boolean signals) of both \emph{safety properties} (mutual exclusion, safety interlocks) that specify what must never occur, and \emph{liveness properties} that guarantee eventual progress. This system is sampled at the same rate as the heart model's data generation rate. Of the seven properties verified, six are satisfied, with the fairness property $[\varphi_{fairness}]$ not being satisfied, which indicates that there is a problem in the model or the specification.

The pedestrian crossing system is a discrete system with integer-valued signals and demonstrates SSTL's ability to handle integer-valued signals, verifying properties over both boolean control signals and integer queue sizes. This system is sampled at the same rate as the model execution rate. The system successfully verifies safety properties (no conflicting flows, queue boundedness) with verification times ranging from 2 to 309 seconds, demonstrating the framework's capability to reason about mixed signal types. We see from Table~\ref{tab:sstl-properties-all-cases} that all properties are satisfied, but the bounded response properties ($[\varphi_{threshold\_walk}]$ and $[\varphi_{bounded\_wait}]$), suggesting potential issues with the model or the specification. Additionally, it takes significantly less time (in seconds, as opposed to hundreds of seconds) to identify a fault in a faulty model, demonstrating the efficiency of the SSTL framework when our goal is to determine if the model is faulty.

The heart model is sampled at a fixed interval of 0.001 seconds and demonstrates SSTL's expressiveness for complex physiological systems with precise timing constraints. The heart model is implemented in C and then used in Promela with a top-level abstraction of the heart model to verify the properties. All four properties, which are expressed in continuous time (like in STL) with SSTL syntax, are satisfied \footnote{We run the verification to a depth of 200,000, in SPIN, due to memory constraints, and no errors were found.} by the healthy heart model, including time-bounded response properties $[\varphi_{AV}]$ and $[\varphi_{VV}]$ that capture critical atrioventricular (AV) conduction delays and ventricular refractory periods. Verification times range from 6.4-65.9 seconds, with the longer times for the bounded temporal properties reflecting the complexity of reasoning over integer-valued electrogram signals with millisecond-precision timing constraints.

We also tested the same heart model properties by introducing diseases in the model. We tested for AV Conduction Block, LBB Conduction Block, and RBB Conduction Block \footnote{These blocks prevent the conduction of the electrical signal from one section of the heart to another. Results are omitted here for brevity}. We find that the property $[\varphi_{AV}]$ is not satisfied by the diseased heart model in all three diseases, while the properties $[\varphi_{VV}]$, $[\varphi_{liveness\_A}]$ and $[\varphi_{liveness\_V}]$ are satisfied. Thus, we can infer that these diseases affect AV conduction delays but do not affect the liveness of the heart model.

However, we see higher memory requirements for complex systems with larger state spaces, like the pedestrian crossing system and the heart model, both of which have integer-valued (or converted to integer-valued) signals. The memory requirements reach 7000 MB and 9000 MB for the pedestrian crossing system and the heart model, respectively. Nevertheless, the memory requirements can be reduced using SPIN's memory reduction features (which may increase the verification time) and using model abstraction to reduce the state space. Other, more efficient LTL model checking techniques can also be used to reduce the memory requirements.

Overall, a key insight from these case studies is the generality of the SSTL framework: \emph{any property expressible in STL that satisfies the Signal Invariance Hypothesis (SIH) can be systematically converted to SSTL and subsequently translated to LTL$_{\mathcal{P}}$ for verification}. This is on top of SSTL's ability to directly verify discrete systems. The verified properties span the full spectrum of temporal logic expressiveness: safety properties using the always operator ($\square$), liveness properties using eventually ($\lozenge$), bounded response properties using time-constrained operators ($\lozenge_{[[a],[b]]}$, $\square_{[[a],[b]]}$), and complex combinations thereof. 
\section{Related Work}
\label{sec:related-work}

  \begin{table*}[ht]
    \centering
    \caption{STL verification approaches vs. our SSTL}
    \begin{tabular}{llll}
      \toprule
      \textbf{Work} & \textbf{Core idea} & \textbf{Strength} & \textbf{Limitation vs. SSTL} \\
      \midrule
 Roehm et al.~\cite{roehm_stl_2016} & Reach-set abstraction & Sound/complete for samples & Requires reachability; costly abstractions \\
 Bae\&Lee~\cite{bae_bounded_2019} & Syntactic separation + SMT & Refutational completeness (bounded) & Complex SMT over reals; bounded scope \\
 Yu et al.~\cite{yu_stlmc_2022} & $\epsilon$-strengthening + SMT & Robust model checking & Guarantees up to thresholds only \\
 Lercher\&Althoff~\cite{lercher2024using} & Four-valued STL on sets & Incremental verification & Depends on set refinement strategy \\
 Belta\&Sadraddini~\cite{belta2019formal} & DC programming (robustness trees) & Convex subproblems & Restrictive dynamics/spec classes \\
      \bottomrule
    \end{tabular}
    \label{tab:stl-verification}
  \end{table*}

\subsection{Discretisation Landscape}
\label{sec:stl-time-discretisation}
Table \ref{tab:discretisation-landscape} presents prior discretisation approaches. As shown, Fainekos and Pappas~\cite{fainekos_robustness_2009} propose testing continuous-time signals via discrete-time analysis using timed state sequences $\mu=(\sigma,\tau)$, where a sampling function $\tau$ and a bounding function $E(\cdot)$ enable interpolation. Their guarantees rely on variable or constant sampling with bounded step sizes and regularity conditions such as Lipschitz continuity or bounded derivatives. This yields a principled approach to monitoring continuous properties in discrete time, albeit at the cost of strong assumptions regarding signal smoothness. Deshmukh et al.~\cite{bartocci_specification-based_2018} instead advocate piecewise-constant interpolation and Robust Interval Semantics (RoSI) over discrete instants $\{t_0,\ldots,t_N\}$ without requiring uniform sampling, which is attractive for online, sliding-window monitoring; however, the piecewise-constant reconstruction can be too coarse for applications that need fidelity between samples.

Another line of work discretises the time itself. Chen et al.~\cite{chen_optimal_1995} define a discrete-time STL with uniform sampling over $\mathbb{N}{\ge 0}$ and add a cumulative-time operator $C^I\tau$, offering a clean framework but breaking the explicit tie to dense-time semantics crucial for physical reasoning. Ferrère et al.~\cite{donze_efficient_2013} develop Mixed-time STL with interface operators $@{cd}$ and $@{dc}$ to translate between continuous and discrete domains under periodic sampling (period $T$) and right-continuation, making time conversions explicit yet demanding careful interface discipline to preserve meaning. Pant et al.~\cite{yang_continuous-time_2020} adopt a uniform grid $[0\!:\!dt\!:\!T]$ and robust STL semantics for trajectory optimisation, assuming $dt\in\mathbb{R}^+$ and finite control input sequences; their focus is optimisation, not verification.

In contrast to these approaches, SSTL provides a new time-discretised temporal logic that maintains a proven equivalence to continuous-time STL semantics through the Signal Invariance Hypothesis, without requiring interpolation schemes or interface operators. This equivalence ensures that verification results obtained using SSTL directly correspond to properties of the underlying continuous system.

\subsection{STL Verification}
\label{sec:stl-verification}

Verifying STL over CPS is challenging due to the presence of continuous-time signals and, in general, is undecidable~\cite{bae_bounded_2019}. Table \ref{tab:stl-verification} summarises representative verification approaches by their core idea relative to our approach. As shown, Roehm et al.~\cite{roehm_stl_2016} abstract to reach sets, yielding sound/complete results for sampled-time STL at the cost of expensive reachability; Bae and Lee~\cite{bae_bounded_2019} use syntactic separation (STL-GT) to reduce bounded problems to SMT, but still require complex real-arithmetic encodings; and Yu et al.'s STLmc~\cite{yu_stlmc_2022} combines $\varepsilon$-strengthening with SMT to obtain robust guarantees up to a threshold, yet remains limited to quantifier-free fragments over reals.

Other directions similarly trade generality for tractability. As shown in the table \ref{tab:stl-verification}, Sato et al.~\cite{yang_continuous-time_2020} encode variable-interval STL as MILP, making decidability hinge on linear/rectangular dynamics; Lercher and Althoff~\cite{lercher2024using} adopt four-valued STL over reachable sets with incremental refinement, whose precision depends on the underlying reachability engine; and Takayama et al.~\cite{belta2019formal} address discrete-time STL via DC programming, achieving convex subproblems at the expense of restricting system and property classes.

Our SSTL framework provides a different approach: by operating directly in
the discrete time domain with proven equivalence to continuous-time STL, we
enable verification using discrete-time model checking techniques while
maintaining semantic soundness with respect to continuous-time properties.
This avoids the need for complex SMT encodings or reachability abstractions
while providing guarantees that hold for the underlying continuous system.

\section{Conclusions}
\label{sec:conclusion}

This paper introduces \ac{SSTL}, the first synchronous discrete abstraction of \ac{STL} designed for decidable verification of cyber-physical systems for both safety and liveness properties. We address the fundamental challenge of undecidability in STL verification by leveraging discrete time while maintaining soundness and completeness under the Signal Invariance Hypothesis (SIH). As long as the SIH property holds, STL properties can be translated to SSTL properties and verified using SPIN.

Our methodology consists of three key steps: first, we discretise continuous-time signals by applying the time projection operator $[t]: \mathbb{R}_{\geq0} \rightarrow \mathbb{N}_{\geq0}$, where $[t] = \lfloor t/\Delta t \rfloor$ maps real time to discrete ticks. The SIH ensures that signals remain invariant within each tick interval, preserving information between sample points. Second, we provide theoretical measures to calculate the sampling rate $\Delta t$ for a given application domain and the signal properties. Third, we translate SSTL formulas to LTL$_{\mathcal{P}}$ using a translation function $\tau$, where bounded temporal operators $\mathcal{U}_{[[a],[b]]}$, $\Box_{[[a],[b]]}$, $\lozenge_{[[a],[b]]}$ are encoded using the $\textsf{within}[a,b]$ predicate that captures time windows relative to obligation entry points. 

We verify the translated formulas using SPIN's exhaustive state-space exploration, leveraging its native support for arithmetic predicates over state variables. The verification process integrates continuous dynamics through embedded C code in Promela models, where physical simulations (e.g., ion channel dynamics in the cardiac model) execute at each state transition. SPIN evaluates predicates over real-valued signals (discretised to finite precision for a finite state space), boolean state indicators, and integer timing variables, providing on-the-fly evaluation during state exploration. The model checker constructs a Büchi automaton from the LTL$_{\mathcal{P}}$ formula and performs intersection with the system model's state graph, detecting violations through acceptance cycle detection.

Thus, SSTL provides the theoretical foundation and practical verification methodology to address this critical need, enabling the development of the next generation of provably safe autonomous cyber-physical systems. Future research directions include extending SSTL to include robustness measures, testing the framework on more complex models and trying different model checkers, integrating with runtime monitoring for complementary assurance, and exploring automatic property synthesis and property translation.

\begin{acks}
  The authors from the University of Auckland and Auckland University of Technology acknowledge the support of the Google grants
  Designing Scalable Synchronous Applications over Google bittide and Towards a formally
  verified autonomous vehicle by leveraging the bittide protocol. They also acknowledge the
  biweekly meetings with the Google bittide team, namely Calin Cascaval, Tammo Spalink, Sanjay Lall, and Martin Izzard. They would also like to thank Supratim Gupta for his valuable feedback and suggestions on the signal processing aspects of the paper.
\end{acks} 

\bibliographystyle{ACM-Reference-Format}
\bibliography{references}

\appendix
\section{Syntax and Semantics of STL}
\label{appendix:syntax_and_semantics_of_stl}
\begin{definition}
  \label{def:stl_trace}
  (STL Trace)
  A STL trace $w$ is a function $w: \mathbb{R}_{\geq0} \rightarrow \mathbb{R}^n$ that maps real-time to signal values. For a given trace $w$ and real-time $t \in \mathbb{R}_{\geq0}$, $x^{w}_i(t)$ denotes the value of signal $x_i$ at time $t$ in trace $w$, for $i = 1, 2, \ldots, n$.
\end{definition}

\begin{definition}
  \label{def:stl_syntax}
  (STL Syntax)
  The syntax of STL formulas $\varphi$ is defined as follows:
  \begin{align}
    \varphi &::=\top \mid x_i^w(t) \geq 0 \mid \neg \varphi \mid \varphi_1 \land \varphi_2 \mid \varphi_1 \mathcal{U}_{[a,b]} \varphi_2
  \end{align}
\end{definition}

\begin{definition}
  \label{def:stl_semantics}
  (STL Semantics)
  The semantics of STL formulas $\varphi$ is defined as follows:
  \begin{align}
    (w,t)\models\top \iff \top \text{ is true} \nonumber \\
    (w,t)\models x_i^{w}(t)\geq0 \iff x_i^{w}(t)\geq0 \nonumber \\
    (w,t)\models\neg\varphi \iff (w,t)\not\models\varphi \nonumber \\
    (w,t)\models\varphi_1\land\varphi_2 \iff (w,t)\models\varphi_1 \text{ and } (w,t)\models\varphi_2 \nonumber 
  \end{align}
  \begin{align}
    (w,t)\models\varphi_1\mathcal{U}_{[a,b]}\varphi_2 \iff \exists t_1\in[t+a,t+b]: (w,t_1)\models\varphi_2 \text{ and } \nonumber\\
     \forall t_2\in[t,t_1): (w,t_2)\models\varphi_1 \nonumber
  \end{align}
\end{definition}    

\section{Proof of Theorem \ref{theorem:stl_to_sstl_satisfaction}}
\label{appendix:stl_to_sstl_satisfaction}
\begin{proof}

\underline{Forward Direction:} Given $(w,t) \models \varphi$, we need to show that $(w_d,[t]) \models [\varphi]$. Let $t\in [k\Delta t, (k+1)\Delta t)$ for some $k \in \mathbb{N}_{\geq0}$. Then, $[t] = k$ and due to SIH, $\vec{x}(t) = \vec{x}([t])$. Proof is now based on the four cases of the STL formula $\varphi$ and the SSTL semantics.

\begin{itemize}
    \item \textbf{Case 1: $\varphi=\top$.} 
    
    Then $[\varphi]=\top$, and trivially
    \[
    (w,t)\models\top \implies (w_d,[t])\models\top.
    \]

    \item \textbf{Case 2: $\varphi=x_i^{w}(t)\geq0$.} 
    
    By definition $[\varphi]=x_i^{w_d}([t])\geq0$.
    By SIH we have $x_i^{w}(t)=x_i^{w_d}([t])$, hence
    \[
    (w,t)\models x_i^{w}(t)\geq0 \implies  (w_d,[t])\models x_i^{w_d}([t])\geq0.
    \]

\item \textbf{Case 3: $\varphi=\neg\psi$.} 

Then $[\varphi]=\neg[\psi]$.
\[
\begin{array}{r@{\ }c@{\ }l}
(w, t)\models\neg\psi
&\iff& (w, t)\not\models\psi\\
&\implies& (w_d,[t])\not\models[\psi]\quad(\text{SIH})\\
&\implies& (w_d,[t])\models\neg[\psi]\\
\therefore (w, t)\models\neg\psi &\implies& (w_d,[t])\models\neg[\psi].
\end{array}
\]

\item \textbf{Case 4: $\varphi=\varphi_1\land\varphi_2$.} 

Then $[\varphi]=[\varphi_1]\land[\varphi_2]$.
\[
\begin{array}{r@{\ }c@{\ }l}
\because (w, t)\models\varphi_1\land\varphi_2
&\iff& (w, t)\models\varphi_1 \ \text{ and } \ (w, t)\models\varphi_2\\
&\implies& (w_d,[t])\models[\varphi_1] \ \text{ and } \ (w_d,[t])\models[\varphi_2] \quad (\text{SIH})\\
&\implies& (w_d,[t])\models[\varphi_1]\land[\varphi_2]\\
\therefore (w, t)\models\varphi_1\land\varphi_2 &\implies& (w_d,[t])\models[\varphi_1]\land[\varphi_2].
\end{array}
\]

\item \textbf{Case 5: $\varphi=\varphi_1\ \mathcal{U}_{[a,b]}\ \varphi_2$.}

By definition $[\varphi]=[\varphi_1]\ \mathcal{U}_{[a,b]}\ [\varphi_2]$.

We must show
\[
(w, t)\models \varphi_1\ \mathcal{U}_{[a,b]}\ \varphi_2
\implies
(w,[t])\models [\varphi_1]\ \mathcal{U}_{[a,b]}\ [\varphi_2].
\]
Assume $(w, t)\models \varphi_1\ \mathcal{U}_{[a,b]}\ \varphi_2$. Then, by the dense-time STL semantics, there exists a real time $t_1\in[t+a,t+b]$ such that
\[
(w,t_1)\models\varphi_2
\quad\text{and}\quad
\forall t_2\in[t,t_1)\;:\;(w,t_2)\models\varphi_1.
\]

From the SSTl semantics of Until, $[t_1] \in [[t+a],[t+b]]$ and $[t_2] \in [[t],[t_1])$ and hence, by SIH we have, $x_i^{w}(t_1) = x_i^{w_d}([t_1])$ and $x_i^{w}(t_2) = x_i^{w_d}([t_2])$ and hence,
\[
(w_d,[t_1])\models[\varphi_2]
\quad\text{and}\quad
(w_d,[t_2])\models[\varphi_1].
\]
Consequently,
\[
(w_d,[t])\models [\varphi_1]\ \mathcal{U}_{[a,b]}\ [\varphi_2].
\]

$\therefore (w, t)\models \varphi_1\ \mathcal{U}_{[a,b]}\ \varphi_2 \implies (w_d,[t])\models [\varphi_1]\ \mathcal{U}_{[a,b]}\ [\varphi_2]$.

\end{itemize}

\underline{Backward Direction:} Given $(w_d,[t]) \models [\varphi]$, we must show that $(w, t) \models \varphi$. We prove this by contradiction, assuming that $(w, t) \not\models \varphi$ but $(w_d,[t]) \models [\varphi]$. Let $t \in [k\Delta t, (k+1)\Delta t)$ for some $k \in \mathbb{N}_{\geq0}$, so $[t] = k$. By the SIH, $x_i^{w}(t) = x_i^{w_d}([t])$.

\begin{itemize}
    \item \textbf{Case 1: $[\varphi]=\top$.} 
    
    Then $\varphi=\top$. The assumption $(w, t) \not\models \top$ is false, as $\top$ is always true.
    
    $\therefore (w_d,[t])\models\top \implies (w, t)\models\top$.

    \item \textbf{Case 2: $[\varphi]=x_i^{w_d}([t])\geq0$.}
    \begin{itemize}
        \item By assumption, $(w_d,[t]) \models x_i^{w_d}([t]) \geq 0$, which means $x_i^{w_d}([t]) \geq 0$.
        \item By SIH, $x_i^{w}(t) = x_i^{w_d}([t])$.
        \item Therefore, $x_i^{w}(t) \geq 0$, which means $(w,t) \models x_i^{w}(t) \geq 0$.
        \item This contradicts the assumption that $(w, t) \not\models \varphi$.
        
        $\therefore (w_d,[t])\models x_i^{w_d}([t])\geq0 \implies (w, t)\models x_i^{w}(t)\geq0$.
    \end{itemize}

    \item \textbf{Case 3: $[\varphi]=[\varphi_1]\land[\varphi_2]$.} 
    \begin{itemize}
        \item By SSTL semantics, $(w_d,[t])\models[\varphi_1]\land[\varphi_2]$ implies $(w_d,[t])\models[\varphi_1]$ and $(w_d,[t])\models[\varphi_2]$.
        \item By the SIH, since $(w_d,[t])\models[\varphi_1]$ and $(w_d,[t])\models[\varphi_2]$, it must be that $(w,t)\models\varphi_1$ and $(w,t)\models\varphi_2$.
        \item By STL semantics, $(w,t)\models\psi$ and $(w,t)\models\theta$ means $(w,t)\models\psi\land\theta$.
        \item This contradicts the assumption that $(w, t) \not\models \varphi_1\land\varphi_2$.
        
        $\therefore (w_d,[t])\models[\varphi_1]\land[\varphi_2] \implies (w, t)\models\varphi_1\land\varphi_2$.
    \end{itemize}

    \item \textbf{Case 4: $[\varphi]=\neg[\psi]$.}
    \begin{itemize}
        \item By assumption, $(w_d,[t])\models\neg[\psi]$ which, by SSTL semantics, implies $(w_d,[t])\not\models[\psi]$.
        \item By the SIH, if $(w_d,[t])\not\models[\psi]$, then it must be that $(w,t)\not\models\psi$.
        \item By STL semantics, $(w,t)\not\models\psi$ implies $(w,t)\models\neg\psi$.
        \item This contradicts the assumption that $(w, t) \not\models \neg\psi$.
        
        $\therefore (w_d,[t])\models\neg[\psi] \implies (w, t)\models\neg\psi$.
    \end{itemize}

    \item \textbf{Case 5: $[\varphi]=[\varphi_1]\mathcal{U}_{[a,b]}[\varphi_2]$.}
    \begin{itemize}
        \item By assumption, $(w_d,[t])\models[\varphi_1]\mathcal{U}_{[a,b]}[\varphi_2]$.
        \item By SSTL semantics, there exists a discrete time $[t'] \in [[t+a],[t+b]]$ such that $(w_d,[t']) \models [\varphi_2]$ and for all discrete times $[t''] \in [[t],[t'])$, $(w_d,[t'']) \models [\varphi_1]$.
        \item Let $t_{real} \in [t+a,t+b]$ be a real-time instant that maps to $[t']$, i.e., $[t_{real}] = [t']$.
        \item By the SIH, since $(w_d,[t']) \models [\varphi_2]$, it must be that $(w,t_{real}) \models \varphi_2$.
        \item Similarly, for all $t''_{real} \in [t,t_{real}]$ such that $[t''_{real}] \in [[t],[t'])$, since $(w_d,[t'']) \models [\varphi_1]$, it must be that $(w,t''_{real}) \models \varphi_1$.
        \item By the dense-time STL semantics, the existence of such a $t_{real}$ and a continuous interval of times in $[t,t_{real}]$ that satisfy the conditions means that $(w,t)\models\varphi_1\mathcal{U}_{[a,b]}\varphi_2$.
        \item This contradicts the assumption that $(w,t)\not\models\varphi_1\mathcal{U}_{[a,b]}\varphi_2$.
        
        $\therefore (w_d,[t])\models[\varphi_1]\mathcal{U}_{[a,b]}[\varphi_2] \implies (w, t)\models\varphi_1\mathcal{U}_{[a,b]}\varphi_2$.
    \end{itemize}
\end{itemize}

As both directions are proven, we have shown that $(w,t) \models \varphi \iff (w_d,[t]) \models [\varphi]$.

\end{proof}

%-------------------------------------------
\section{Proof of Theorem \ref{theorem:sstl_to_ltlp_correctness}}
\label{appendix:sstl_to_ltlp_correctness}
\begin{proof}

Let $w_d$ be an SSTL trace and let $\sigma$ be the same sequence of states viewed as an LTL$_{\mathcal P}$ trace. By Definition~\ref{def:ltlp_sstl_time_bijection}, the discrete position $j$ in $\sigma$ coincides with the logical tick $[t]$ of $w_d$, that is, $j=[t]$. We prove the bi-implication by structural induction on $[\varphi]$.

\underline{Forward Direction:} Given $(w_d,[t]) \models [\varphi]$, we need to show that $(\sigma,j) \models \tau([\varphi])$. 

\begin{itemize}
    \item \textbf{Case 1: $[\varphi]=\top$.} 
    
    By Definition~\ref{def:SSTL_boolean_semantics}, $(w_d,[t])\models \top$ holds trivially. Similarly, by Definition~\ref{def:ltlp_semantics}, $(\sigma,j)\models \top$ also holds trivially. Therefore,
    \[
    (w_d,[t])\models\top \implies (\sigma,j)\models\tau(\top).
    \]

    \item \textbf{Case 2: $[\varphi]=x_i^{w_d}([t])\geq0$.} 
    
    By Definition~\ref{def:ltlp_sstl_time_bijection}, the translation is $\tau\bigl(x_i^{w_d}([t])\ge0\bigr)=s_{j}(x_i^{w_d})\ge0$. Given that $(w_d,[t])\models x_i^{w_d}([t])\geq0$, we have $x_i^{w_d}([t])\geq0$ by the SSTL semantics (Definition~\ref{def:SSTL_boolean_semantics}). Since the state $s_{j}$ in the LTL$_{\mathcal P}$ trace assigns the same value (where $j=[t]$), $s_{j}(x_i^{w_d})=x_i^{w_d}([t])$, we obtain $s_{j}(x_i^{w_d})\geq0$. Thus, by the LTL$_{\mathcal P}$ semantics (Definition~\ref{def:ltlp_semantics}), $(\sigma,j)\models s_{j}(x_i^{w_d})\ge0$.

\item \textbf{Case 3: $[\varphi]=\neg[\psi]$.} 

Assume $(w_d,[t])\models\neg[\psi]$.
\begin{itemize}
    \item By SSTL semantics (Definition~\ref{def:SSTL_boolean_semantics}), this means $(w_d,[t])\not\models[\psi]$.
    \item By the structural induction hypothesis on the subformula $[\psi]$, we have that $(w_d,[t])\not\models[\psi]$ implies $(\sigma,j)\not\models\tau([\psi])$.
    \item By LTL$_{\mathcal P}$ semantics (Definition~\ref{def:ltlp_semantics}), $(\sigma,j)\not\models\tau([\psi])$ means $(\sigma,j)\models\neg\tau([\psi])$.
    \item By the translation (Definition~\ref{def:ltlp_sstl_time_bijection}), $\tau(\neg[\psi])=\neg\tau([\psi])$.
\end{itemize}
Therefore, $(w_d,[t])\models\neg[\psi] \implies (\sigma,j)\models\tau(\neg[\psi])$.

\item \textbf{Case 4: $[\varphi]=[\psi_1]\land[\psi_2]$.} 

Assume $(w_d,[t])\models[\psi_1]\land[\psi_2]$.
\begin{itemize}
    \item By SSTL semantics (Definition~\ref{def:SSTL_boolean_semantics}), this means $(w_d,[t])\models[\psi_1]$ and $(w_d,[t])\models[\psi_2]$.
    \item By the structural induction hypothesis, $(w_d,[t])\models[\psi_1]$ implies $(\sigma,j)\models\tau([\psi_1])$ and $(w_d,[t])\models[\psi_2]$ implies $(\sigma,j)\models\tau([\psi_2])$.
    \item By LTL$_{\mathcal P}$ semantics (Definition~\ref{def:ltlp_semantics}), $(\sigma,j)\models\tau([\psi_1])$ and $(\sigma,j)\models\tau([\psi_2])$ means $(\sigma,j)\models\tau([\psi_1])\land\tau([\psi_2])$.
    \item By the translation (Definition~\ref{def:ltlp_sstl_time_bijection}), $\tau([\psi_1]\land[\psi_2])=\tau([\psi_1])\land\tau([\psi_2])$.
\end{itemize}
Therefore, $(w_d,[t])\models[\psi_1]\land[\psi_2] \implies (\sigma,j)\models\tau([\psi_1]\land[\psi_2])$.

\item \textbf{Case 5: $[\varphi]=[\psi_1]\,\mathcal U\,[\psi_2]$ (unbounded Until).}

Assume $(w_d,[t])\models[\psi_1]\,\mathcal U\,[\psi_2]$.
\begin{itemize}
    \item By SSTL semantics (Definition~\ref{def:SSTL_boolean_semantics}), there exists a discrete time $[t']\geq[t]$ such that $(w_d,[t'])\models[\psi_2]$ and for all $[t'']\in[[t],[t'])$, $(w_d,[t''])\models[\psi_1]$.
    \item By the structural induction hypothesis, $(w_d,[t'])\models[\psi_2]$ implies $(\sigma,[t'])\models\tau([\psi_2])$ and for all $[t'']\in[[t],[t'])$, $(w_d,[t''])\models[\psi_1]$ implies $(\sigma,[t''])\models\tau([\psi_1])$.
    \item Since $j=[t]$ and by the correspondence established in Definition~\ref{def:ltlp_sstl_time_bijection}, there exists $k=[t']\geq j$ such that $(\sigma,k)\models\tau([\psi_2])$ and for all $l\in[j,k]$, $(\sigma,l)\models\tau([\psi_1])$.
    \item This is precisely the LTL$_{\mathcal P}$ semantics of Until (Definition~\ref{def:ltlp_semantics}): $(\sigma,j)\models\tau([\psi_1])\mathcal U\tau([\psi_2])$.
\end{itemize}
Therefore, $(w_d,[t])\models[\psi_1]\,\mathcal U\,[\psi_2] \implies (\sigma,j)\models\tau([\psi_1]\,\mathcal U\,[\psi_2])$.

\item \textbf{Case 6: $[\varphi]=[\psi_1]\,\mathcal U_{[[a],[b]]}\,[\psi_2]$ (bounded Until).}

Assume $(w_d,[t])\models[\psi_1]\,\mathcal U_{[[a],[b]]}\,[\psi_2]$.
\begin{itemize}
    \item By SSTL semantics (Definition~\ref{def:SSTL_boolean_semantics}), there exists a discrete time $[t']\in[[t+a],[t+b]]$ such that $(w_d,[t'])\models[\psi_2]$ and for all $[t'']\in[[t],[t'])$, $(w_d,[t''])\models[\psi_1]$.
    \item By the structural induction hypothesis, $(w_d,[t'])\models[\psi_2]$ implies $(\sigma,[t'])\models\tau([\psi_2])$ and for all $[t'']\in[[t],[t'])$, $(w_d,[t''])\models[\psi_1]$ implies $(\sigma,[t''])\models\tau([\psi_1])$.
    \item Let $j_0=[t]$ be the obligation start index. Since $[t']\in[[t+a],[t+b]]$, we have $[t+a]\leq[t']\leq[t+b]$, which means $j_0+a\leq j=[t']\leq j_0+b$.
    \item By the definition of the $\textsf{within}[a,b]$ guard, this means $\textsf{within}[a,b]$ is true at position $j=[t']$.
    \item Therefore, $(\sigma,[t'])\models\tau([\psi_2])\land\textsf{within}[a,b]$.
    \item Combining with the fact that for all $l\in[j_0,[t'])$, $(\sigma,l)\models\tau([\psi_1])$, we have by the LTL$_{\mathcal P}$ semantics of Until: $(\sigma,j_0)\models\tau([\psi_1])\mathcal U(\tau([\psi_2])\land\textsf{within}[a,b])$.
    \item By the translation, $\tau([\psi_1]\mathcal U_{[[a],[b]]}[\psi_2])=\tau([\psi_1])\mathcal U(\tau([\psi_2])\land\textsf{within}[a,b])$.
\end{itemize}
Therefore, $(w_d,[t])\models[\psi_1]\,\mathcal U_{[[a],[b]]}\,[\psi_2] \implies (\sigma,j)\models\tau([\psi_1]\,\mathcal U_{[[a],[b]]}\,[\psi_2])$.

\item \textbf{Case 7: $[\varphi]=\lozenge_{[[a],[b]]}[\psi]$ (bounded eventually).}

Assume $(w_d,[t])\models\lozenge_{[[a],[b]]}[\psi]$.
\begin{itemize}
    \item By the definition of bounded eventually in SSTL (Definition~\ref{def:SSTL_boolean_semantics}), this is equivalent to $\top\mathcal U_{[[a],[b]]}[\psi]$.
    \item By SSTL semantics, there exists a discrete time $[t']\in[[t+a],[t+b]]$ such that $(w_d,[t'])\models[\psi]$.
    \item By the structural induction hypothesis, $(w_d,[t'])\models[\psi]$ implies $(\sigma,[t'])\models\tau([\psi])$.
    \item Let $j_0=[t]$. Since $[t']\in[[t+a],[t+b]]$, we have $j_0+a\leq[t']\leq j_0+b$, which means $\textsf{within}[a,b]$ is true at position $[t']$.
    \item Therefore, $(\sigma,[t'])\models\tau([\psi])\land\textsf{within}[a,b]$.
    \item Since there exists such a position $[t']\geq[t]$, by the LTL$_{\mathcal P}$ semantics of Diamond: $(\sigma,j)\models\Diamond(\tau([\psi])\land\textsf{within}[a,b])$.
    \item By the translation (Definition~\ref{def:ltlp_sstl_time_bijection}), $\tau(\lozenge_{[[a],[b]]}[\psi])=\Diamond(\tau([\psi])\land\textsf{within}[a,b])$.
\end{itemize}
Therefore, $(w_d,[t])\models\lozenge_{[[a],[b]]}[\psi] \implies (\sigma,j)\models\tau(\lozenge_{[[a],[b]]}[\psi])$.

\item \textbf{Case 8: $[\varphi]=\square_{[[a],[b]]}[\psi]$ (bounded always).}

Assume $(w_d,[t])\models\square_{[[a],[b]]}[\psi]$.
\begin{itemize}
    \item By the definition of bounded always in SSTL (Definition~\ref{def:SSTL_boolean_semantics}), this is equivalent to $\neg\lozenge_{[[a],[b]]}\neg[\psi]$.
    \item By SSTL semantics, for all discrete times $[t']\in[[t+a],[t+b]]$, we have $(w_d,[t'])\models[\psi]$.
    \item By the structural induction hypothesis, for all $[t']\in[[t+a],[t+b]]$, $(w_d,[t'])\models[\psi]$ implies $(\sigma,[t'])\models\tau([\psi])$.
    \item Let $j_0=[t]$. For any position $k\geq j_0$, if $\textsf{within}[a,b]$ is true at $k$, then by definition $j_0+a\leq k\leq j_0+b$, which means $k=[t']$ for some $[t']\in[[t+a],[t+b]]$.
    \item For such $k$, we have $(\sigma,k)\models\tau([\psi])$.
    \item Therefore, for all $k\geq j_0$, if $\textsf{within}[a,b]$ holds at $k$, then $\tau([\psi])$ holds at $k$. This is the implication $\textsf{within}[a,b]\rightarrow\tau([\psi])$ at each position.
    \item By the LTL$_{\mathcal P}$ semantics of Box: $(\sigma,j_0)\models\Box(\textsf{within}[a,b]\rightarrow\tau([\psi]))$.
    \item By the translation (Definition~\ref{def:ltlp_sstl_time_bijection}), $\tau(\square_{[[a],[b]]}[\psi])=\Box(\textsf{within}[a,b]\rightarrow\tau([\psi]))$.
\end{itemize}
Therefore, $(w_d,[t])\models\square_{[[a],[b]]}[\psi] \implies (\sigma,j)\models\tau(\square_{[[a],[b]]}[\psi])$.

\end{itemize}

Thus, the forward direction holds for all SSTL formulae.

\underline{Backward Direction:} Given $(\sigma,j) \models \tau([\varphi])$, we need to show that $(w_d,[t]) \models [\varphi]$. 

\begin{itemize}
    \item \textbf{Case 1: $[\varphi]=\top$.} 
    
    By Definition~\ref{def:ltlp_semantics}, $(\sigma,j)\models \top$ holds trivially. Similarly, by Definition~\ref{def:SSTL_boolean_semantics}, $(w_d,[t])\models \top$ also holds trivially. Therefore,
    \[
    (\sigma,j)\models\tau(\top) \implies (w_d,[t])\models\top.
    \]

    \item \textbf{Case 2: $[\varphi]=x_i^{w_d}([t])\geq0$.} 
    
    Assume $(\sigma,j)\models s_j(x_i^{w_d})\ge0$.
    \begin{itemize}
        \item By LTL$_{\mathcal P}$ semantics (Definition~\ref{def:ltlp_semantics}), this means $s_j(x_i^{w_d})\geq0$.
        \item By the correspondence in Definition~\ref{def:ltlp_sstl_time_bijection}, $s_j(x_i^{w_d})=x_i^{w_d}([t])$.
        \item Therefore, $x_i^{w_d}([t])\geq0$.
        \item By SSTL semantics (Definition~\ref{def:SSTL_boolean_semantics}), $(w_d,[t])\models x_i^{w_d}([t])\geq0$.
    \end{itemize}
    Therefore, $(\sigma,j)\models\tau(x_i^{w_d}([t])\geq0) \implies (w_d,[t])\models x_i^{w_d}([t])\geq0$.

\item \textbf{Case 3: $[\varphi]=\neg[\psi]$.} 

Assume $(\sigma,j)\models\tau(\neg[\psi])$.
\begin{itemize}
    \item By the translation, $\tau(\neg[\psi])=\neg\tau([\psi])$.
    \item By LTL$_{\mathcal P}$ semantics (Definition~\ref{def:ltlp_semantics}), $(\sigma,j)\models\neg\tau([\psi])$ means $(\sigma,j)\not\models\tau([\psi])$.
    \item By the structural induction hypothesis on the subformula $[\psi]$, $(\sigma,j)\not\models\tau([\psi])$ implies $(w_d,[t])\not\models[\psi]$.
    \item By SSTL semantics (Definition~\ref{def:SSTL_boolean_semantics}), $(w_d,[t])\not\models[\psi]$ means $(w_d,[t])\models\neg[\psi]$.
\end{itemize}
Therefore, $(\sigma,j)\models\tau(\neg[\psi]) \implies (w_d,[t])\models\neg[\psi]$.

\item \textbf{Case 4: $[\varphi]=[\psi_1]\land[\psi_2]$.} 

Assume $(\sigma,j)\models\tau([\psi_1]\land[\psi_2])$.
\begin{itemize}
    \item By the translation, $\tau([\psi_1]\land[\psi_2])=\tau([\psi_1])\land\tau([\psi_2])$.
    \item By LTL$_{\mathcal P}$ semantics (Definition~\ref{def:ltlp_semantics}), $(\sigma,j)\models\tau([\psi_1])\land\tau([\psi_2])$ means $(\sigma,j)\models\tau([\psi_1])$ and $(\sigma,j)\models\tau([\psi_2])$.
    \item By the structural induction hypothesis, $(\sigma,j)\models\tau([\psi_1])$ implies $(w_d,[t])\models[\psi_1]$ and $(\sigma,j)\models\tau([\psi_2])$ implies $(w_d,[t])\models[\psi_2]$.
    \item By SSTL semantics (Definition~\ref{def:SSTL_boolean_semantics}), $(w_d,[t])\models[\psi_1]$ and $(w_d,[t])\models[\psi_2]$ means $(w_d,[t])\models[\psi_1]\land[\psi_2]$.
\end{itemize}
Therefore, $(\sigma,j)\models\tau([\psi_1]\land[\psi_2]) \implies (w_d,[t])\models[\psi_1]\land[\psi_2]$.

\item \textbf{Case 5: $[\varphi]=[\psi_1]\,\mathcal U\,[\psi_2]$ (unbounded Until).}

Assume $(\sigma,j)\models\tau([\psi_1]\mathcal U[\psi_2])$.
\begin{itemize}
    \item By the translation, $\tau([\psi_1]\mathcal U[\psi_2])=\tau([\psi_1])\mathcal U\tau([\psi_2])$.
    \item By LTL$_{\mathcal P}$ semantics (Definition~\ref{def:ltlp_semantics}), there exists $k\geq[t]$ such that $(\sigma,k)\models\tau([\psi_2])$ and for all $l\in[[t],k)$, $(\sigma,l)\models\tau([\psi_1])$.
    \item Let $[t']:=k$. By the structural induction hypothesis, $(\sigma,k)\models\tau([\psi_2])$ implies $(w_d,[t'])\models[\psi_2]$ and for all $l\in[[t],k)$, $(\sigma,l)\models\tau([\psi_1])$ implies $(w_d,l)\models[\psi_1]$.
    \item Since $[t']=k\geq[t]$, we have $[t']\geq[t]$.
    \item Therefore, there exists $[t']\geq[t]$ such that $(w_d,[t'])\models[\psi_2]$ and for all $[t'']\in[[t],[t'])$, $(w_d,[t''])\models[\psi_1]$.
    \item By SSTL semantics (Definition~\ref{def:SSTL_boolean_semantics}), this is precisely $(w_d,[t])\models[\psi_1]\mathcal U[\psi_2]$.
\end{itemize}
Therefore, $(\sigma,j)\models\tau([\psi_1]\mathcal U[\psi_2]) \implies (w_d,[t])\models[\psi_1]\mathcal U[\psi_2]$.

\item \textbf{Case 6: $[\varphi]=[\psi_1]\,\mathcal U_{[[a],[b]]}\,[\psi_2]$ (bounded Until).}

Assume $(\sigma,j)\models\tau([\psi_1]\mathcal U_{[[a],[b]]}[\psi_2])$.
\begin{itemize}
    \item By the translation, $\tau([\psi_1]\mathcal U_{[[a],[b]]}[\psi_2])=\tau([\psi_1])\mathcal U(\tau([\psi_2])\land\textsf{within}[a,b])$.
    \item By LTL$_{\mathcal P}$ semantics (Definition~\ref{def:ltlp_semantics}), there exists $k\geq[t]$ such that $(\sigma,k)\models\tau([\psi_2])\land\textsf{within}[a,b]$ and for all $l\in[[t],k)$, $(\sigma,l)\models\tau([\psi_1])$.
    \item From $(\sigma,k)\models\tau([\psi_2])\land\textsf{within}[a,b]$, we have $(\sigma,k)\models\tau([\psi_2])$ and $(\sigma,k)\models\textsf{within}[a,b]$.
    \item Let $j_0=[t]$ and $[t']:=k$. Since $(\sigma,k)\models\textsf{within}[a,b]$, by the definition of the guard we have $j_0+a\leq k\leq j_0+b$, which means $[t+a]\leq[t']\leq[t+b]$, so $[t']\in[[t+a],[t+b]]$.
    \item By the structural induction hypothesis, $(\sigma,k)\models\tau([\psi_2])$ implies $(w_d,[t'])\models[\psi_2]$ and for all $l\in[[t],k)$, $(\sigma,l)\models\tau([\psi_1])$ implies $(w_d,l)\models[\psi_1]$.
    \item Therefore, there exists $[t']\in[[t+a],[t+b]]$ such that $(w_d,[t'])\models[\psi_2]$ and for all $[t'']\in[[t],[t'])$, $(w_d,[t''])\models[\psi_1]$.
    \item By SSTL semantics (Definition~\ref{def:SSTL_boolean_semantics}), this is precisely $(w_d,[t])\models[\psi_1]\mathcal U_{[[a],[b]]}[\psi_2]$.
\end{itemize}
Therefore, $(\sigma,j)\models\tau([\psi_1]\mathcal U_{[[a],[b]]}[\psi_2]) \implies (w_d,[t])\models[\psi_1]\mathcal U_{[[a],[b]]}[\psi_2]$.

\item \textbf{Case 7: $[\varphi]=\lozenge_{[[a],[b]]}[\psi]$ (bounded eventually).}

Assume $(\sigma,j)\models\tau(\lozenge_{[[a],[b]]}[\psi])$.
\begin{itemize}
    \item By the translation, $\tau(\lozenge_{[[a],[b]]}[\psi])=\Diamond(\tau([\psi])\land\textsf{within}[a,b])$.
    \item By LTL$_{\mathcal P}$ semantics (Definition~\ref{def:ltlp_semantics}), there exists $k\geq[t]$ such that $(\sigma,k)\models\tau([\psi])\land\textsf{within}[a,b]$.
    \item From this, we have $(\sigma,k)\models\tau([\psi])$ and $(\sigma,k)\models\textsf{within}[a,b]$.
    \item Let $j_0=[t]$ and $[t']:=k$. Since $(\sigma,k)\models\textsf{within}[a,b]$, by the definition of the guard we have $j_0+a\leq k\leq j_0+b$, which means $[t+a]\leq[t']\leq[t+b]$, so $[t']\in[[t+a],[t+b]]$.
    \item By the structural induction hypothesis, $(\sigma,k)\models\tau([\psi])$ implies $(w_d,[t'])\models[\psi]$.
    \item Therefore, there exists $[t']\in[[t+a],[t+b]]$ such that $(w_d,[t'])\models[\psi]$.
    \item By the definition of bounded eventually in SSTL (Definition~\ref{def:SSTL_boolean_semantics}), this is precisely $(w_d,[t])\models\lozenge_{[[a],[b]]}[\psi]$.
\end{itemize}
Therefore, $(\sigma,j)\models\tau(\lozenge_{[[a],[b]]}[\psi]) \implies (w_d,[t])\models\lozenge_{[[a],[b]]}[\psi]$.

\item \textbf{Case 8: $[\varphi]=\square_{[[a],[b]]}[\psi]$ (bounded always).}

Assume $(\sigma,j)\models\tau(\square_{[[a],[b]]}[\psi])$.
\begin{itemize}
    \item By the translation, $\tau(\square_{[[a],[b]]}[\psi])=\Box(\textsf{within}[a,b]\rightarrow\tau([\psi]))$.
    \item By LTL$_{\mathcal P}$ semantics (Definition~\ref{def:ltlp_semantics}), for all $k\geq[t]$, we have $(\sigma,k)\models\textsf{within}[a,b]\rightarrow\tau([\psi])$.
    \item This means for all $k\geq[t]$, if $(\sigma,k)\models\textsf{within}[a,b]$, then $(\sigma,k)\models\tau([\psi])$.
    \item Consider an arbitrary $[t']\in[[t+a],[t+b]]$. Let $k:=[t']$. Then $[t+a]\leq k\leq[t+b]$, which means $k$ satisfies $j_0+a\leq k\leq j_0+b$ where $j_0=[t]$.
    \item Therefore, $(\sigma,k)\models\textsf{within}[a,b]$.
    \item By the implication above, $(\sigma,k)\models\tau([\psi])$.
    \item By the structural induction hypothesis, $(\sigma,k)\models\tau([\psi])$ implies $(w_d,[t'])\models[\psi]$.
    \item Since $[t']$ was arbitrary in $[[t+a],[t+b]]$, for all $[t']\in[[t+a],[t+b]]$, we have $(w_d,[t'])\models[\psi]$.
    \item By the definition of bounded always in SSTL (Definition~\ref{def:SSTL_boolean_semantics}), this is precisely $(w_d,[t])\models\square_{[[a],[b]]}[\psi]$.
\end{itemize}
Therefore, $(\sigma,j)\models\tau(\square_{[[a],[b]]}[\psi]) \implies (w_d,[t])\models\square_{[[a],[b]]}[\psi]$.

\end{itemize}

Thus, the backward direction holds for all SSTL formulae.

%-------------------------------------------------
Combining both directions, we have shown that for every SSTL formula $[\varphi]$ and every discrete time $[t]$,
\[(w_d,[t]) \models [\varphi]\;\iff\;(\sigma,j) \models \tau([\varphi]),\]
which completes the proof of Theorem~\ref{theorem:sstl_to_ltlp_correctness}.

\end{proof}

\section{Proof of Lemma \ref{lemma:boundedness-j0}}
\label{appendix:boundedness_j0}
\begin{proof}[Proof sketch]
  The core of this proof is that each time-bounded temporal operator $\beta$ in the SSTL
  formula contributes obligations whose lifetimes are limited to the width of
  its interval.  Let $w_\beta=b_\beta-a_\beta+1$ be that width and let
  $W=\sum_{\beta} w_\beta$ (finite because the formula is finite).

  At any discrete position $j$ we track, for every active obligation, the entry
  index $j_0$ captured when the operator became enabled.  Two observations
  are important:
  \begin{enumerate}
    \item \textbf{Lifetime bound.}  An obligation created at $j_0$ for operator
          $\beta$ can only stay alive while $j$ satisfies $j_0+a_\beta\le j\le
          j_0+b_\beta$; hence its lifetime is at most $w_\beta$ steps.
    \item \textbf{Creation rate.}  Per operator, at most one new obligation can
          start at each position.
  \end{enumerate}
  Combining the two facts, within any sliding window of length $w_\beta$ there
  can be at most $w_\beta$ simultaneously alive obligations for $\beta$.
  Summing over all operators yields the uniform bound $W$ on the number of
  distinct $j_0$ values that must be tracked at any position.  This finiteness
  underpins the decidability result in
  Lemma~\ref{remark:decidability}.\qedhere
\end{proof}

\section{Implementation Details}
\label{appendix:implementation_details}

Handling of bounded time intervals in the translation is not unique for SSTL to LTL$_{\mathcal{P}}$ translation and other equivalent translations may be used. Here, we present one such translation that is implementable and faster to verify.

The $\textsf{within}[a,b]$ condition used in the translation is conceptually and logically correct. However, it is not easily implementable for verification purposes due to how SPIN handles state space exploration in LTL formulae. To address this, we can modify the translation to an equiavalent translation that is implementable and (possibly) faster to verify as follows:

\begin{align}
  \tau\bigl([\varphi_1]\;\mathcal U_{[[a],[b]]}\;[\varphi_2]\bigr) &:= \tau([\varphi_1]) \land (j \leq j_0 + b) \;\mathcal U\; \tau([\varphi_2]) \land (j \geq j_0 + a) \\
  \tau\bigl(\lozenge_{[[a],[b]]}[\varphi]\bigr) &:= \tau\bigl(\top\;\mathcal U_{[[a],[b]]}\;[\varphi]\bigr) \\
  \tau\bigl(\square_{[[a],[b]]}[\varphi]\bigr) &:= \neg \lozenge_{[[a],[b]]}\neg[\varphi]
\end{align}

We use this translation in the implementation of the SSTL to LTL$_{\mathcal{P}}$ translation.

\begin{proof}[Proof sketch of equivalence]
  We compare the original guard $\textsf{within}[a,b] := (j_0+a \le j \le j_0+b)$
  with the implementable encodings used in the three translated cases.
  For each operator we show that the set of traces accepted by the SPIN–friendly
  formula is identical to that accepted by the specification using
  $\textsf{within}[a,b]$.
  \begin{enumerate}
    \item \textbf{Bounded Until.}  At any position $j$ the obligation should
          succeed when some future $j'$ satisfies $j_0+a \le j' \le j_0+b$ and
          $\varphi_2$ holds, while $\varphi_1$ holds at all ticks prior to $j'$.
          Re-writing this with the ordinary Until yields exactly the LTL formula
          $\tau([\varphi_1]) \land (j \le j_0+b)$ $\;\mathcal U\;$
          $\tau([\varphi_2]) \land (j \ge j_0+a)$.  The left conjunct ensures
          the path has not yet stepped beyond $j_0+b$ (matching the upper bound
          of the window) while the right conjunct enforces that the witness tick
          occurs no earlier than $j_0+a$.  Hence the temporal window is
          enforced equivalently.
    \item \textbf{Bounded Eventually and Always.} This is a direct consequence of the bounded Until case.
    \end{enumerate}

  In every case the implementable encoding enforces the same lower and upper
  bounds on $j$ as the original $\textsf{within}[a,b]$ predicate, establishing
  semantic equivalence.\qedhere
\end{proof}

\end{document}